\tikzset{
  > = LaTeX,
  pics/lisa/.style = {
    code = {
      \draw [densely dashed, thick] (0, 1) -- (210:1) -- (330:1) -- cycle;
      \foreach \th in {90, 210, 330} {
        \draw [thick, fill = white] (\th:1) circle [radius = 0.2];
      }
    }
  }
}
\theoremstyle{plain}
\newtheorem{thm}{Theorem}
\newtheorem{lem}[thm]{Lemma}
\theoremstyle{definition}
\newtheorem{defn}[thm]{Definition}
\newcommand{\euler}{\ensuremath{\mathrm{e}}} 
\newcommand{\nwse}[3]{\ensuremath{#1^{#2}_{\phantom{#2} #3}}}
\newcommand{\swne}[3]{\ensuremath{#1_{#2}^{\phantom{#2} #3}}}
\newcommand{\hodge}{{\ast}} 
\newcommand*{\diag}{\operatorname{diag}}
\newcommand{\diff}[1]{\text{d}#1}
\newcommand{\Diff}[1]{\text{D}#1}
\newcommand{\udec}{Departamento de Fí­sica, Universidad de Concepción, Casilla 160-C, 4070105 Concepción, Chile}
\newcommand{\unal}{Departamento de Fí­­sica, Universidad Nacional de Colombia, 111321 Bogotá, Colombia}
\newcommand{\unap}{Facultad de Ciencias, Universidad Arturo Prat, 1110939 Iquique, Chile}
\newcommand{\asctp}{Arnold Sommerfeld Center for Theoretical Physics, Ludwig-Maximilians-Universität München, Theresienstraße 37, 80333 Munich, Germany}
\newcommand{\usach}{Departamento de Física, Universidad de Santiago de Chile, Avenida Ecuador 3493, Estación Central, 9170124 Santiago, Chile}
\newcommand{\ucn}{Departamento de Enseñanza de las Ciencias Básicas, Universidad Católica del Norte, Larrondo 1281, 1781421 Coquimbo, Chile}
\newcommand{\cecs}{Centro de Estudios Científicos (CECs), Avenida Arturo Prat 514, 5110466 Valdivia, Chile}
\newcommand{\imcas}{Institute of Mathematics of the Czech Academy of Sciences, Žitná 25, 11567 Praha 1, Czechia}
\begin{document}

\title{Luminal Propagation of Gravitational Waves in Scalar-tensor Theories:\\The Case for Torsion}

\affiliation{\unap}

\author{José Barrientos}
\email{josebarrientos@udec.cl}
\affiliation{\udec}
\affiliation{\ucn}
\affiliation{\imcas}

\author{Fabrizio Cordonier-Tello}
\email{f.cordonier@physik.uni-muenchen.de}
\affiliation{\asctp}

\author{Cristóbal Corral}
\email{cristobal.corral@usach.cl}
\affiliation{\usach}

\author{Fernando Izaurieta}
\email{fizaurie@udec.cl}
\affiliation{\udec}

\author{Perla Medina}
\email{perlamedina@udec.cl}
\affiliation{\udec}
\affiliation{\cecs}

\author{Eduardo Rodríguez}
\email{eduarodriguezsal@unal.edu.co}
\affiliation{\unal}

\author{Omar Valdivia}
\email{ovaldivi@unap.cl}
\affiliation{\unap}

\date{\today}

\begin{abstract}
Scalar-tensor gravity theories with a nonminimal Gauss--Bonnet coupling typically lead to an anomalous propagation speed for gravitational waves, and have therefore been tightly constrained by multimessenger observations such as GW170817/GRB170817A. In this paper we show that this is not a general feature of scalar-tensor theories, but rather a consequence of assuming that spacetime torsion vanishes identically. At least for the case of a nonminimal Gauss--Bonnet coupling, removing the torsionless condition restores the canonical dispersion relation and therefore the correct propagation speed for gravitational waves. To achieve this result we develop a new approach, based on the first-order formulation of gravity, to deal with perturbations on these Riemann--Cartan geometries.
\end{abstract}

\pacs{04.50.+h}

\keywords{Nonvanishing Torsion, Gravitational Waves, Riemann--Cartan Geometry, Gauss--Bonnet Coupling}

\preprint{LMU-ASC 33/19} 

\maketitle




\section{Introduction}
\label{Sec_Intro}

The multimessenger measurements of the GW170817 event by the LIGO/Virgo Collaboration~\cite{TheLIGOScientific:2017qsa} and the gamma-ray burst GRB~170817A by Fermi and other observatories~\cite{GBM:2017lvd} have provided a strong limit of about one part in \num{e15} on the difference between the propagation speed of gravitational waves (GW) and the speed of light~\cite{Monitor:2017mdv,Huerta2019}.
This observation imposes severe constraints on different viable alternatives to general relativity (GR) aimed at explaining the dark sector of the Universe by means of degrees of freedom beyond the metric ones. In particular, many scalar-tensor theories of the Horndeski/Galileon type predicted, at least in some regimes, an anomalous propagation speed for GWs~\cite{Bettoni:2016mij,Ezquiaga:2017ekz,Ezquiaga:2018btd,Baker:2017hug,Sakstein:2017xjx,Heisenberg:2017qka,Kreisch:2017uet,Nojiri:2017hai}, and even in some cases an anomalous propagation speed for sound waves in Earth's atmosphere~\cite{Mukherjee:2017fqz, Babichev:2018rfj}. This observation implies that, depending on the type of coupling that the scalar fields develop with the geometry, some of these theories have been disfavored by the observational data.

A particular interaction that has been widely studied in the literature is the coupling of scalar fields to topological invariants, e.g., the Pontryagin or Gauss--Bonnet (GB) terms, motivated by effective field theories, string theory, and particle physics~\cite{Alexander:2009tp}. From a phenomenological viewpoint, the scalar-Pontryagin modification to GR---also known as Chern--Simons modified gravity---is an interesting extension that might explain the flat galaxy rotation curves dispensing with dark matter~\cite{Konno:2008np}, while leaving the propagation speed of GWs unaffected~\cite{Nishizawa:2018srh}. This interaction generates nontrivial effects when rotation is included~\cite{Konno:2007ze,Grumiller:2007rv,Konno:2009kg,Yunes:2009hc,Ahmedov:2010fz,Brihaye:2016lsx}, providing a smoking gun in future GW detectors~\cite{Alexander:2007zg,Alexander:2007kv,Yunes:2016jcc,Alexander:2017jmt}. The couplings between scalar fields and the GB term, on the other hand, have been studied in different setups and several solutions that exhibit spontaneous scalarization have been reported~\cite{Gurses:2008zz,Granda:2011eh,Granda:2011ja,Granda:2011kx,Kanti:2015dra,Doneva:2017bvd,Silva:2017uqg,Antoniou:2017hxj,Antoniou:2017acq,Doneva:2017duq,Doneva:2018rou,Bakopoulos:2018nui,Myung:2019wvb,Antoniou:2019awm}. Their stability, however, depends on the choice of the coupling between the scalar field and the GB term~\cite{Myung:2018iyq,Blazquez-Salcedo:2018jnn,Silva:2018qhn}. In spite of this, the theory is experimentally disadvantaged from an astrophysical viewpoint, since it develops an anomalous propagation speed for GWs~\cite{Gong:2017kim}.

Scalar-tensor theories have been formulated in geometries that depart from the pseudo-Riemannian framework several times in the past. In particular, the gravitational role of Riemann--Cartan (RC) geometries, characterized by curvature and torsion, was first discussed by Cartan and Einstein themselves~\cite{EinsteinCartanLetters}, and later on in the framework of gauge theories of gravitation~\cite{Kib61,Sciama:1964wt,Hehl76,Blagojevic:2013xpa}.
Within its simplest formulation---the Einstein--Cartan--Sciama--Kibble (ECSK) theory---torsion is a nonpropagating field sourced only by the spin density of matter.
The nonminimal coupling of scalar fields to geometry dramatically changes this conclusion.
As shown in Ref.~\cite{Valdivia:2017sat}, the typical Horndeski/Galileon couplings and second-order derivatives are generic sources of torsion, even in the absence of any spin density.
When scalar fields are coupled to the Nieh--Yan topological invariant~\cite{Nieh:1981ww}, a regularization procedure of the axial anomaly in RC spacetimes can be prescribed~\cite{Mercuri:2009zi,Chandia:1997hu,Obukhov:1997pz,Kreimer:1999yp,Chandia:1999az}, and a torsion-descendent axion that might solve the strong $CP$ problem in a gravitational fashion is predicted~\cite{Lattanzi:2009mg,Castillo-Felisola:2015ema,Karananas:2018nrj}. The nonminimal coupling to the Gauss--Bonnet invariant, on the other hand, can be motivated from dimensional reduction of string-generated gravity models~\cite{Castillo-Felisola:2016kpe}, and it could drive the late-time acceleration of the Universe in the absence of the cosmological constant~\cite{Toloza:2013wi,Espiro:2014uda,Cid:2017wtf}. The first-order formulation of Chern--Simons modified gravity produces interesting phenomenology when coupled with fermions~\cite{Alexander:2008wi}, and it has been shown that the different nonminimal couplings support four-dimensional black string configurations in vacuum, possessing locally $\text{AdS}_3\times\mathbb{R}$ geometries with nontrivial torsion~\cite{Cisterna:2018jsx}. Remarkably, some of these models can be regarded as a zero-parameter extension of GR~\cite{Alexander:2019ctv}, whose cosmological implications have been recently studied in Ref.~\cite{Alexander:2019wne}. In general, assuming a torsion-free condition reduces the number of independent fields, making the torsionless theory an entirely different dynamical system from the torsionful one.

In this work, we show that it is only the  \emph{torsionless} version of the scalar-tensor theory based on the scalar-GB coupling that predicts an anomalous propagation speed for GWs.
When torsion is taken into account as a rightful component of geometry, GWs generically propagate at the speed of light, and hence those torsional theories survive unfalsified by multimessenger astronomy. Since the dispersion relation for electromagnetic waves (EMW) also remain unmodified by torsion, both EMWs and GWs move along null geodesics, even on a background with nonvanishing torsion.
This does not mean, however, that torsion is wholly undetectable; as shown in Ref.~\cite{Barrientos:2019msu}, torsion affects the propagation of \emph{polarization} for GWs.\footnote{Similar results have been found in teleparallel gravity theories~\cite{Hohmann:2018wxu,Soudi:2018dhv,Bahamonde:2019ipm} and in $f(R)$ theories with a nonminimal coupling to the Nieh--Yan term~\cite{Bombacigno:2018tbo,Bombacigno:2018tih}.}
Thus, at least for this case, the recent observational data only disfavor the torsionless version of the theory, but not its more general torsional relative.

Our article is organized as follows. Section~\ref{Sec_Lagrangian} presents the main line of reasoning, where we introduce the Lagrangian that defines the theory and give general arguments on why the torsionless version of the GB coupling changes the speed of GWs, while the most general dynamical torsion case does not. Sections~\ref{Sec_Math}--\ref{Sec_GW_Speed} prove this statement in detail. In Sec.~\ref{Sec_Math}, we define some mathematical operators that greatly simplify the analysis of GWs on an RC geometry and study their properties and algebra.
In Sec.~\ref{Sec_Gauge}, we use a Lorentz-covariant version of the Lie derivative to generalize the standard Lorenz gauge fixing for the trace-reversed perturbation to this setting.
Section~\ref{Sec_Scales} describes how to separate low- and high-frequency terms.
We follow the approach of Ref.~\cite{Maggiore:1900zz}, with appropriate modifications for the case of RC geometry.
Section~\ref{Sec_GW_Speed} focuses on the leading high-frequency term to prove that torsion restores the canonical dispersion relation, including speed, for the metric mode of GWs.
The eikonal approximation is used to show that the new torsional mode (variously called ``torsionon''~\cite{Izaurieta:2019dix}, ``roton''~\cite{Hehl:1979xk}, or ``gravity $W$ and $Z$ bosons''~\cite{Boos:2016cey}) propagates interacting with the polarization of the standard metric mode, generalizing the results of Ref.~\cite{Barrientos:2019msu}.
Finally, conclusions and further comments are given in Sec.~\ref{Sec_TheEnd}, while many details on the calculations are provided in Appendix~\ref{Apendice}.


\section{Scalar-tensor model with Gauss--Bonnet coupling}
\label{Sec_Lagrangian}

Let $M$ be a four-dimensional spacetime manifold with metric signature $\left(-,+,+,+ \right)$. We shall consider a scalar-tensor theory whose independent dynamical fields are the vierbein one-form $e^{a} = e^{a}{}_\mu \diff{x^\mu}$,\footnote{The vierbein is related to the spacetime metric $g_{\mu\nu}$ through $g_{\mu\nu}=\eta_{ab}e^{a}{}_\mu e^{b}{}_\nu$, with $\eta_{ab} = \diag\left(-,+,+,+\right)$.} the spin connection one-form $\omega^{ab}=\omega^{ab}{}_{\mu} \diff{x^\mu}$, and a complex zero-form scalar field $\phi$, with $\bar{\phi}$ being its complex conjugate. The Lagrangian four-form describing the scalar-tensor theory with Gauss--Bonnet coupling is given by
\begin{align}
  L & =
  \frac{1}{4 \kappa_{4}} \epsilon_{abcd} R^{ab} \wedge e^{c} \wedge e^{d} 
  \nonumber \\ &\quad
  -\frac{1}{4! \kappa_{4}} \left( \Lambda + \kappa_{4} V \right)
  \epsilon_{abcd} e^{a} \wedge e^{b} \wedge e^{c} \wedge e^{d}
   \nonumber \\ &\quad
  -\mathrm{d} \bar{\phi} \wedge \hodge \mathrm{d} \phi
  -\frac{3}{8 \kappa_{4}} \frac{1}{\Lambda + \kappa_{4} V}
  \epsilon_{abcd} R^{ab} \wedge R^{cd},
  \label{Eq_Lagrangian-0}
\end{align}
where $R^{ab} = \mathrm{d} \omega^{ab} + \nwse{\omega}{a}{c} \wedge \omega^{cb}$ is the Lorentz curvature two-form, and $V$ stands for the scalar field's potential, which is assumed to depend only on the magnitude of $\phi$, i.e., $V = V \left( \left\vert \phi \right\vert \right)$. Throughout this article, we work in the context of RC geometry, meaning that the vierbein and spin connection are considered as independent degrees of freedom, and therefore the torsion two-form $T^{a}=\mathrm{D}e^{a}=\mathrm{d}e^{a}+ \nwse{\omega}{a}{b}\wedge e^{b}$ does not need to vanish. The Lagrangian depends only on first-order derivatives of the spin connection and it does not contain derivatives of the vierbein: we do not include any explicit torsional terms~\cite{Mardones:1990qc}.
The coupling constant $\kappa_{4}$ is related to Newton's gravitational constant $G_N$ through $\kappa_4 = 8\pi G_N$, and the cosmological constant is denoted by $\Lambda$.

The Lagrangian~\eqref{Eq_Lagrangian-0} allows for propagating torsion in vacuum and can be regarded as both a particular case of Horndeski's theory~\cite{Horndeski:1974wa} and as a generalization of dynamical Chern--Simons modified gravity~\cite{Jackiw:2003pm,Alexander:2009tp}, both of which set the torsion equal to zero at the outset (but see Ref.~\cite{Valdivia:2017sat} for the torsional version of Horndeski's theory).
The theory defined by Eq.~\eqref{Eq_Lagrangian-0} actually differs from the standard torsional ECSK theory only in the nonminimal coupling $1/\left( \Lambda + \kappa_{4} V \right)$ with the GB density. When $V = \text{const.}$, this last term becomes a topological invariant proportional to the Euler characteristic and does not contribute to the field dynamics in the bulk, although it becomes relevant in the regularization of Noether charges for asymptotically locally anti-de~Sitter spacetimes~\cite{Aros:1999id,Aros:1999kt}. In the general case, namely $V \neq \text{const.}$, this term contributes to the field equations acting as a source of torsion~\cite{Lattanzi:2009mg,Castillo-Felisola:2015ema,Karananas:2018nrj,Castillo-Felisola:2016kpe,Toloza:2013wi,Espiro:2014uda,Cid:2017wtf,Cisterna:2018jsx,Valdivia:2017sat}.
The particular nonminimal coupling with the GB term we use is but one choice; the results regarding the speed of GWs are still valid even if the $1/\left( \Lambda + \kappa_{4} V \right)$ coupling is replaced by an arbitrary function of (the magnitude of) the scalar field, $f \left( \left\vert \phi \right\vert \right)$.
Our choice, however, has several important algebraic and physical properties which lead to a much more transparent treatment.

To start with, this choice for the nonminimal coupling with the GB term allows the Lagrangian to be written in a much more compact way,
\begin{equation}
  L = -\frac{l^{2}}{8\kappa_{4}}
  \epsilon_{abcd} F^{ab} \wedge F^{cd}
  -\mathrm{d} \bar{\phi} \wedge \hodge \mathrm{d} \phi,
\end{equation}
where
\begin{align}
  \Lambda & = \frac{3}{l^{2}}, &  \euler^{2\sigma} & = 1 + \frac{\kappa_{4}}{\Lambda} V, &  F^{ab} & = \euler^{-\sigma} R^{ab} - \frac{1}{l^{2}} \euler^{\sigma} e^{a} \wedge e^{b}.
\end{align}
The independent stationary variations of $L$ with respect to $e^{a}$, $\omega^{ab}$, $\phi$ and $\bar{\phi}$ yield
\begin{align}
  \delta L & =
  \delta e^{a} \wedge \mathcal{E}_{a} +
  \delta \omega^{ab} \wedge \mathcal{E}_{ab} +
  \delta \phi \bar{\mathcal{E}} +
  \delta \bar{\phi} \mathcal{E} 
  \nonumber \\ &\quad +
  \mathrm{d} \left(
    \delta \omega^{ab} \wedge \mathcal{B}_{ab} +
    \delta \phi \bar{\mathcal{B}} +
    \delta \bar{\phi}\mathcal{B}
  \right),
  \label{eq:deltaL}
\end{align}
where
\begin{align}
  \mathcal{E}_{a} & =
  \frac{1}{2\kappa_{4}} \euler^{\sigma} \epsilon_{abcd} e^{b} \wedge F^{cd} 
  \nonumber \\ &\quad +
  \frac{1}{2} \left(
    Z^{b} \bar{Z}_{a} +
    \bar{Z}^{b} Z_{a} -
    \left\vert Z \right\vert^{2} \delta_{a}^{b}
  \right)
  \hodge e_{b},
  \label{Eq_EOM_Vierbein} \\
  \mathcal{E}_{ab} & =
  -\mathrm{D} \left(
    \frac{l^{2}}{4\kappa_{4}} \euler^{-\sigma} \epsilon_{abcd} F^{cd}
  \right),
  \label{Eq_EOM_Spin_Connection} \\
  \mathcal{E} & =
  \mathrm{d} \hodge \mathrm{d} \phi -
  \frac{1}{2} \frac{\phi}{\left\vert \phi \right\vert}
  \frac{\partial}{\partial \left\vert \phi \right\vert} \left(
    \frac{l^{2}}{4\kappa_{4}} \epsilon_{abcd} F^{ab} \wedge F^{cd}
  \right),
  \label{Eq_EOM_Scalar} \\
  \mathcal{B}_{ab} & = -
  \frac{l^{2}}{4\kappa_{4}} \euler^{-\sigma} \epsilon_{abcd} F^{cd},
  \label{Eq_BC_Spin_Connection} \\
  \mathcal{B} & = -\hodge \mathrm{d} \phi,
  \label{Eq_BC_Scalar}
\end{align}
and\footnote{See Definition~\ref{def:In} in Sec.~\ref{Sec_Math} for the mathematical properties of the operator $-\hodge \left( e^{a} \wedge \hodge \right.$.}
\begin{align}
  Z^{a} & = -\hodge \left( e^{a} \wedge \hodge \mathrm{d} \phi \right), &
  \bar{Z}^{a} & = -\hodge \left( e^{a} \wedge \hodge \mathrm{d} \bar{\phi} \right).
\end{align}
The field equations set $\mathcal{E}_a$, $\mathcal{E}_{ab}$, $\mathcal{E}$, and $\bar{\mathcal{E}}$ to zero on $M$, while the boundary conditions are given by the vanishing of $\mathcal{B}_{ab}$, $\mathcal{B}$, and $\bar{\mathcal{B}}$ on $\partial M$.
Furthermore, our $1/\left( \Lambda + \kappa_{4} V \right)$ GB coupling allows the field equations~(\ref{Eq_EOM_Vierbein})--(\ref{Eq_EOM_Scalar}) to be fully compatible with the boundary conditions~(\ref{Eq_BC_Spin_Connection})--(\ref{Eq_BC_Scalar}).
In particular, $\mathcal{E}_{ab}=\mathrm{D}\mathcal{B}_{ab}$, and the system admits the maximally symmetric solution in vacuum
\begin{align}
  \phi & = \phi_{0}, \\
  R^{ab} & = \frac{1}{l^{2}} \euler^{2\sigma_{0}} e^{a} \wedge e^{b}, \\
  T^{a} & = 0,
\end{align}
where $\phi_{0} = \text{const.}$
and $\sigma_{0} = \sigma \left( \phi_{0} \right)$.
This solution describes a spacetime of constant curvature and zero torsion, where the (constant) scalar field plays no role.

When scalar-tensor gravity theories are treated within the first-order formalism, torsion propagates in vacuum sourced by the derivatives of the scalar fields (for further details, see Ref.~\cite{Valdivia:2017sat}). As a matter of fact, the field equation $\mathcal{E}_{ab} = 0$ [cf.~Eq.~(\ref{Eq_EOM_Spin_Connection})] can be rewritten as
\begin{align}
  T^{p} & = -l^{2} \euler^{-2\sigma}
  \frac{\partial \sigma}{\partial \left\vert \phi \right\vert} \bigg\{
    \frac{1}{2} \epsilon_{abcd} e^{p} \wedge e^{d} \hodge \left(
      \mathrm{d} \left\vert \phi \right\vert \wedge R^{ab} \wedge e^{c}
    \right)
   \nonumber \\ &\quad 
    -\hodge \left[
      2 e_{q} \wedge \hodge \left(
        \mathrm{d} \left\vert \phi \right\vert \wedge R^{pq}
      \right)
    \right]
  \bigg\}.
\end{align}
The propagating nature of torsion becomes manifest in this equation, since its right-hand side possesses derivatives of the torsion through $R^{ab}$. This can be seen from the decomposition of the two-form curvature into their Riemannian and torsional pieces\footnote{We use the notation $\mathring{X}$ to denote the ``torsionless version'' of $X$.}
\begin{equation}
  R^{ab} = \mathring{R}^{ab} +
  \mathring{\mathrm{D}} \kappa^{ab} +
  \nwse{\kappa}{a}{c} \wedge \kappa^{cb},
\end{equation}
where $\mathring{R}^{ab}$ is the canonical Riemann curvature two-form and $\kappa^{ab} = \omega^{ab} - \mathring{\omega}^{ab}$ is the contorsion tensor one-form, related to the two-form torsion as $T^{a} = \nwse{\kappa}{a}{b} \wedge e^{b}$. 

To understand why torsion restores the speed of light of GWs for the scalar-GB coupling, let us go back to the Lagrangian~(\ref{Eq_Lagrangian-0}).
Since we are working in the context of RC geometry, the vierbein and the spin connection are independent fields.
This means that the GB coupling term \emph{does not depend on the vierbein}, namely
\begin{equation}
  \delta_{e} \left(
    \frac{1}{\Lambda + \kappa_{4} V} \epsilon_{abcd} R^{ab} \wedge R^{cd}
  \right)
  = 0.\label{Eq_delta_e_GB=0}
\end{equation}
Therefore, the field equation for the vierbein, $\mathcal{E}_{a} = 0$ [cf.~Eq.~(\ref{eq:deltaL})], is insensitive to its presence.
In fact, this equation can be cast into the Einstein--Hilbert form as
\begin{equation}
  \epsilon_{abcd} R^{bc} \wedge e^{d} =
  \frac{\kappa_{4}}{3} \epsilon_{bcde} e^{b} \wedge e^{c} \wedge e^{d}
  \swne{\mathcal{T}}{a}{e},
  \label{Eq_EOM_Vierbein_EH}
\end{equation}
where $\swne{\mathcal{T}}{b}{a}$ is an effective stress-energy tensor given by
\begin{equation}
  \swne{\mathcal{T}}{b}{a} =
    Z^{a} \bar{Z}_{b} + \bar{Z}^{a} Z_{b}
  -  \left(
    \left\vert Z \right\vert^{2} +
    \frac{3}{\kappa_{4} l^{2}} \euler^{2\sigma}
  \right)
  \delta_{b}^{a}.
  \label{Eq_Stress-Energy-Scalar+Cosmological_Const}
\end{equation}
Since GWs arise from perturbations to $\epsilon_{abcd} R^{ab} \wedge e^{c}$, as in the usual torsionless case, the GB coupling cannot possibly contribute to them.

How does the torsionless condition so dramatically alter the behavior of GWs?
To see why, note that naively imposing $T^{a}=0$ in the field equations [cf.~Eqs.~(\ref{Eq_EOM_Vierbein})--(\ref{Eq_BC_Scalar})] does not lead to the standard torsionless case; instead, we get a constant scalar field.
The torsionless condition is a \emph{constraint} on the geometry, and as such it must be imposed through the addition of a Lagrangian multiplier two-form $M_{a}$ to the Lagrangian~(\ref{Eq_Lagrangian-0}),
\begin{equation}
  L \mapsto L_{M} = L - T^{a} \wedge M_{a}.\label{Eq_L-LM}
\end{equation}
It is this modified Lagrangian, $L_{M}$, which reproduces the standard torsionless dynamics.
The field equations derived from $\delta L_{M} = 0$ read
\begin{align}
  \mathcal{E}_{a}^{\left( M \right)} & =
  \mathcal{E}_{a} - \mathrm{D} M_{a} = 0,
  \label{Eq_EOM_Vierbein_T=0} \\
  \mathcal{E}_{ab}^{\left( M \right)} & =
  \mathcal{E}_{ab} - \frac{1}{2} \left(
    M_{a} \wedge e_{b} - M_{b} \wedge e_{a}
  \right)
  = 0,
  \label{Eq_EOM_Spin_Connection_T=0} \\
  \bar{\mathcal{E}}^{\left( M \right)} & =
  \bar{\mathcal{E}} = 0,
  \label{Eq_EOM_Scalar*_T=0} \\
  \mathcal{E}^{\left( M \right)} & =
  \mathcal{E} = 0,
  \label{Eq_EOM_Scalar_T=0} \\
  T^{a} & = 0.
  \label{Eq_EOM_Multiplier_T=0}
\end{align}
Equation~(\ref{Eq_EOM_Spin_Connection_T=0}) can be solved for the Lagrangian multiplier to find
\begin{equation}
  M^{a} = \hodge \left( 2 e_{b} \wedge \hodge \mathcal{E}^{ba} \right) + \frac{1}{2} e^{a} \hodge \left( e_{b} \wedge e_{c} \wedge \hodge \mathcal{E}^{bc}\right).\label{Eq_L_Multiplier}
\end{equation}
Since $\mathcal{E}_{ab}$ includes the Lorentz curvature two-form $R^{ab}$, the term $\mathrm{D} M_{a}$ in Eq.~(\ref{Eq_EOM_Vierbein_T=0}) turns out to be proportional to derivatives of $R^{ab}$.
It is straightforward to see that such terms in $\mathcal{E}_{a}^{\left( M \right)  }$ make a nonzero leading-order contribution in the eikonal limit for perturbations, and therefore modify their dispersion relation and the GW speed.

The lesson to be learned from this analysis is that imposing the torsionless condition \textit{a~priori} is very different from imposing it \textit{a~posteriori}: the torsionless theory, where $T^{a} = 0$ from the outset, has fewer degrees of freedom and it constitutes therefore a different dynamical system from the full torsional theory. Imposing the torsionless condition on the field equations of the torsional theory implies, in the nonminimally coupled case, reducing the scalar field to triviality.
Even if the Lagrangians for both theories may look superficially identical, they are inherently different theories; as shown above, the torsionless condition amounts to a constraint on the dynamics. In the case of the GB coupling, the price of such a constraint is the anomalous speed for GWs.

While certainly plausible, we still have to rigorously show that Eq.~(\ref{Eq_EOM_Vierbein_EH}) leads to the canonical dispersion relation for GWs, including their speed.
To achieve this goal, we must prove that the torsional terms hidden in Eq.~(\ref{Eq_EOM_Vierbein_EH}) do not change the GW dispersion relation and speed.
One also has to deal with the fact that torsion is a propagating field in the nonminimally coupled theory.
The torsional mode interacts with the standard metric mode, and it is not a~priori obvious whether it changes their speed.

In order to prove this point, in the following sections we provide a complete treatment of GWs on a spacetime with torsion.
The necessary mathematical scaffolding is developed in Sec.~\ref{Sec_Math}.
Then, in Sec.~\ref{Sec_GW_Speed} we come back to Eq.~(\ref{Eq_EOM_Vierbein_EH}) to show that torsion restores the canonical dispersion relation and speed for GWs.


\section{Mathematical Intermezzo}
\label{Sec_Math}

In this section, we introduce the mathematical tools that allow us to describe perturbations and waves in the context of RC geometry.

\subsection{A superalgebra of differential operators}
\label{sec:superalgebra}

The differential operators we define appeared originally in Refs.~\cite{Valdivia:2017sat,Izaurieta:2019dix,Barrientos:2019msu}; here we briefly review them for the benefit of the reader who may be unfamiliar with them.
We also show that these operators form a superalgebra and identify its associated super-Jacobi identity, which, beyond the merely aesthetic, eases the study of the eikonal limit of GWs on an RC geometry.

For the sake of generality, in this section we work on a $d$-dimensional manifold $M$ endowed with an RC geometry and a metric tensor with $\eta_{-}$ negative and $d-\eta_{-}$ positive eigenvalues.
In the rest of the paper we restrict ourselves to $d=4$ and a spacetime signature $\eta_{-} = 1$.
The space of differential $p$-forms on $M$ is denoted by $\Omega^{p} \left( M \right)$.


\begin{defn}[Hodge star operator]
  \label{def:hodge}
  The \emph{Hodge star operator}~\cite{Fla89} is a linear map, $\hodge: \Omega^{p} \left( M \right) \to \Omega^{d-p} \left( M \right)$, that takes a differential $p$-form $\alpha \in \Omega^{p} \left( M \right)$,
  \begin{equation}
    \alpha = \frac{1}{p!} \alpha_{\mu_{1} \cdots \mu_{p}}
    \mathrm{d} x^{\mu_{1}} \wedge \cdots \wedge \mathrm{d} x^{\mu_{p}},
  \end{equation}
  and maps it into its \emph{Hodge dual}, $\hodge \alpha \in \Omega^{d-p} \left( M \right)$, defined by
  \begin{equation}
    \hodge \alpha =
    \frac{\sqrt{\left\vert g \right\vert}}{p! \left( d-p \right)!}
    \epsilon_{\mu_{1} \cdots \mu_{d}}
    \alpha^{\mu_{1} \cdots \mu_{p}}
    \mathrm{d} x^{\mu_{p+1}} \wedge \cdots \wedge \mathrm{d} x^{\mu_{d}},
  \end{equation}
  where $g$ is the determinant of the metric tensor and $\epsilon_{\mu_{1} \cdots \mu_{d}}$ is the totally antisymmetric Levi-Civita pseudotensor.
\end{defn}


\begin{defn}
  \label{def:In}
  The operators $\mathrm{I}^{a_{1} \cdots a_{q}}: \Omega^{p} \left( M \right) \to \Omega^{p-q}\left( M \right)$ act on $p$-forms to produce $\left( p-q \right)$-forms according to the rule\footnote{These operators were first defined in Ref.~\cite{Valdivia:2017sat}, where they were denoted as $\Sigma^{a_{1} \cdots a_{q}}$.}
  \begin{equation}
  \mathrm{I}^{a_{1} \cdots a_{q}} =
  \left( -1 \right)^{\left( d-p \right) \left( p-q \right) + \eta_{-}}
  \hodge \left( e^{a_{1}} \wedge \cdots \wedge e^{a_{q}} \wedge \hodge \right.,
  \label{Eq_Def_Iq}  
  \end{equation}
  where $\hodge$ is the Hodge star operator introduced in Definition~\ref{def:hodge}.
  The most important case is $q=1$,
  \begin{equation}
    \mathrm{I}^{a} = \left( -1 \right)^{d \left( p-1 \right) + \eta_{-}}
    \hodge \left( e^{a} \wedge \hodge \right.,
    \label{eq:Ia}
  \end{equation}
  which acts as a coderivative, satisfying the same sign-corrected Leibniz rule as the exterior derivative.
\end{defn}


\begin{defn}
  \label{def:calD}
  We define $\mathcal{D}_{a}: \Omega^{p} \left( M \right) \to \Omega^{p} \left( M \right)$ to be the derivative operator given by~\cite{Valdivia:2017sat}
  \begin{equation}
    \mathcal{D}_{a} = \left\{ \mathrm{I}_{a}, \mathrm{D} \right\} =
    \mathrm{I}_{a} \mathrm{D} + \mathrm{DI}_{a},
    \label{Eq_Def_Da}
  \end{equation}
  where $\mathrm{I}_{a}$ is the coderivative operator introduced in Definition~\ref{def:In}, with $q=1$, and $\mathrm{D}$ stands for the Lorentz-covariant exterior derivative, $\mathrm{D} = \mathrm{d} + \omega$.
\end{defn}


The $\mathcal{D}_{a}$ derivative plays a major role in the study of GWs on RC geometries.
It satisfies Leibniz's rule (without sign correction) and has many useful properties (see, e.g., Lemmas~\ref{lem:Da-Nabla} and~\ref{lem:D-Da} below).


\begin{lem}
  \label{lem:Da-Nabla}
  Let $\nabla_{\mu} = \partial_{\mu} + \Gamma_{\mu}$ be the usual spacetime covariant derivative for the general (not necessarily torsionless) affine connection $\Gamma^{\rho}_{\mu\sigma}$.
  We have
  \begin{equation}
    \mathcal{D}_{a} = \nabla_{a} + \mathrm{I}_{a} T^{b} \wedge \mathrm{I}_{b},
    \label{Eq_Da-Nabla}
  \end{equation}
  where $\mathrm{I}_{a}$ and $\mathcal{D}_{a}$ are the operators introduced in Definitions~\ref{def:In} and~\ref{def:calD}, and $\nabla_{a} = \swne{e}{a}{\mu} \nabla_{\mu}$. Note that Eq.~(\ref{Eq_Da-Nabla}) implies that $\mathcal{D}_{a}$ and $\nabla_{a}$ coincide in the torsionless case, $\mathring{\mathcal{D}}_{a} = \mathring{\nabla}_{a}$.
\end{lem}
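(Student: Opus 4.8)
The plan is to prove the operator identity by reducing it to a check on a set of algebra generators, exploiting the fact that both sides of Eq.~(\ref{Eq_Da-Nabla}) are ordinary (even) derivations of the exterior algebra. First I would record the algebraic inputs I intend to use: $\mathrm{D}$ is a graded derivation of degree $+1$ obeying the signed Leibniz rule, with $\mathrm{D}e^{a}=T^{a}$; the operator $\mathrm{I}_a$ of Definition~\ref{def:In} (with $q=1$) is a graded derivation of degree $-1$ that acts as contraction with the frame vector $e_a$, so that $\mathrm{I}_a e^{b}=\delta^{b}_{a}$ and $\mathrm{I}_a$ annihilates every $0$-form; and the affine connection $\Gamma$ and the spin connection $\omega$ are tied together by the vierbein postulate, $\nabla_\mu e^{a}{}_\nu = 0$, which is precisely what makes the frame derivative $\nabla_a = e_a{}^\mu\nabla_\mu$ meaningful as the component avatar of $\mathrm{D}$.

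The key structural observation is that all three operators in Eq.~(\ref{Eq_Da-Nabla}) are degree-$0$ derivations with respect to $\wedge$. For $\mathcal{D}_a=\{\mathrm{I}_a,\mathrm{D}\}$ this holds because the graded commutator of a degree $+1$ and a degree $-1$ graded derivation is again a graded derivation, here of degree $0$, i.e.\ it satisfies the unsigned Leibniz rule (the statement quoted right after Definition~\ref{def:calD}). For the torsion term I would verify the derivation property directly: since $\mathrm{I}_a T^{b}$ is a one-form, the sign incurred in moving it past a factor $\beta$ of degree $|\beta|$ exactly cancels the sign in the antiderivation rule for $\mathrm{I}_b$, so that $\beta\mapsto \mathrm{I}_a T^{b}\wedge \mathrm{I}_b\beta$ obeys the unsigned Leibniz rule as well. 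That $\nabla_a$ is a derivation is immediate.

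Because a derivation of $\Omega^{\bullet}(M)$ is determined by its values on a generating set — locally, the $0$-forms together with the coframe one-forms $e^{a}$ — it then suffices to verify Eq.~(\ref{Eq_Da-Nabla}) on these generators. On a $0$-form $f$ one has $\mathrm{I}_a f=0$, hence $\mathcal{D}_a f=\mathrm{I}_a\mathrm{D}f=e_a{}^\mu(\mathrm{D}_\mu f)=\nabla_a f$, while the torsion term vanishes because $\mathrm{I}_b f=0$, so the two sides agree. On a coframe one-form $e^{c}$ one has $\mathrm{I}_a e^{c}=\delta^{c}_{a}$ (a constant), giving $\mathcal{D}_a e^{c}=\mathrm{I}_a\mathrm{D}e^{c}+\mathrm{D}\mathrm{I}_a e^{c}=\mathrm{I}_a T^{c}$; on the other hand the vierbein postulate yields $\nabla_a e^{c}=0$, and the torsion term reduces to $\mathrm{I}_a T^{b}\wedge\mathrm{I}_b e^{c}=\mathrm{I}_a T^{c}$, so again both sides coincide. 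Since two derivations agreeing on all generators agree on every $p$-form, this establishes Eq.~(\ref{Eq_Da-Nabla}); the torsionless corollary $\mathring{\mathcal{D}}_a=\mathring{\nabla}_a$ is then immediate from $T^{a}=0$.

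The main obstacle I anticipate is bookkeeping rather than anything conceptual. One must (i) fix the sign conventions so that the $\mathrm{I}_a$ of Eq.~(\ref{eq:Ia}) really is the frame contraction with $\mathrm{I}_a e^{b}=\delta^{b}_{a}$ and the clean antiderivation rule, and (ii) handle forms carrying additional Lorentz indices, where $\mathrm{I}_a$ is a spectator on those indices while $\mathrm{D}$, $\nabla_a$ and the torsion term all act on them through the same connection $\omega$, so the identity may be checked index-component by index-component. As a safeguard I would also be prepared to run the direct computation, expanding an arbitrary $p$-form in the frame basis and tracking how the ``boundary'' contributions $\mathrm{D}(\mathrm{I}_a\text{-components})$ generated separately by $\mathrm{I}_a\mathrm{D}$ and $\mathrm{D}\mathrm{I}_a$ cancel, leaving exactly $\nabla_a$ together with the single torsion term $\mathrm{I}_a T^{b}\wedge\mathrm{I}_b$.
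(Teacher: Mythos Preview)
The paper states Lemma~\ref{lem:Da-Nabla} without proof; it is part of the review in Sec.~\ref{sec:superalgebra} of operators introduced in Refs.~\cite{Valdivia:2017sat,Izaurieta:2019dix,Barrientos:2019msu}, so there is no in-paper argument to compare yours against.

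Your argument is correct and is the natural one. The three operators $\mathcal{D}_a$, $\nabla_a$, and $\alpha\mapsto\mathrm{I}_aT^{b}\wedge\mathrm{I}_b\alpha$ are indeed degree-zero derivations of $\left(\Omega^\bullet(M),\wedge\right)$, and your sign check for the torsion term is right: $\mathrm{I}_aT^{b}$ being a one-form makes the $(-1)^{|\alpha|}$ from the antiderivation rule for $\mathrm{I}_b$ cancel against the $(-1)^{|\alpha|}$ from commuting $\mathrm{I}_aT^{b}$ past $\alpha$. The generator check on $0$-forms and on $e^{c}$ goes through exactly as you say, with the vierbein postulate giving $\nabla_a e^{c}=0$. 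The only place worth tightening in presentation is your point~(ii): make explicit that the ``$0$-forms'' in the generating set are allowed to carry Lorentz indices, so that on such a $0$-form $\phi^{A}$ one has $\mathcal{D}_a\phi^{A}=\mathrm{I}_a\mathrm{D}\phi^{A}=e_a{}^{\mu}\bigl(\partial_\mu\phi^{A}+\omega^{A}{}_{B\mu}\phi^{B}\bigr)=\nabla_a\phi^{A}$, with both sides using the same spin connection. That is already implicit in what you wrote, but spelling it out removes any doubt that the identity holds for Lorentz-tensor-valued forms and not merely Lorentz scalars.
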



\begin{lem}
  \label{lem:D-Da}
  Equation~(\ref{Eq_Def_Da}) can be inverted to yield
  \begin{equation}
    \mathrm{D} = e^{a} \wedge \mathcal{D}_{a} - T^{a} \wedge \mathrm{I}_{a},
    \label{Eq_D-Da}
  \end{equation}
  where $\mathrm{I}_{a}$ and $\mathcal{D}_{a}$ are the operators introduced in Definitions~\ref{def:In} and~\ref{def:calD}.
\end{lem}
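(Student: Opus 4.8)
The plan is to verify the operator identity by evaluating both sides on an arbitrary $p$-form $\alpha \in \Omega^{p}(M)$ and exploiting the fact that, in the single-index case $q=1$, the operator $\mathrm{I}_{a}$ of Definition~\ref{def:In} is (up to its sign prefactor) the interior product with the frame field dual to $e^{a}$. The cornerstone is the \emph{degree-counting identity} $e^{a} \wedge \mathrm{I}_{a} \beta = \left( \deg \beta \right) \beta$, valid for any homogeneous form $\beta$, which acts as a number operator: wedging with $e^{a}$ and contracting with $\mathrm{I}_{a}$ simply multiplies by the form degree. I would establish this identity first, tracking the degree-dependent sign in Eq.~(\ref{eq:Ia}) to confirm that the prefactors conspire to reproduce exactly the degree of the form with no leftover sign.

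With this in hand, I would insert the definition $\mathcal{D}_{a} = \mathrm{I}_{a} \mathrm{D} + \mathrm{D} \mathrm{I}_{a}$ and wedge with $e^{a}$, so that
\begin{equation}
  e^{a} \wedge \mathcal{D}_{a} \alpha = e^{a} \wedge \mathrm{I}_{a} \left( \mathrm{D} \alpha \right) + e^{a} \wedge \mathrm{D} \left( \mathrm{I}_{a} \alpha \right).
\end{equation}
The first term is immediate: since $\mathrm{D}\alpha$ is a $(p+1)$-form, the counting identity yields $e^{a} \wedge \mathrm{I}_{a} \left( \mathrm{D} \alpha \right) = \left( p+1 \right) \mathrm{D} \alpha$. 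For the second term, the idea is to pull $\mathrm{D}$ to the outside via the Leibniz rule for the Lorentz-covariant exterior derivative. Acting on the $(p-1)$-form $\mathrm{I}_{a}\alpha$ and recalling that $e^{a}$ is a one-form with $\mathrm{D} e^{a} = T^{a}$, I would write
\begin{equation}
  \mathrm{D} \left( e^{a} \wedge \mathrm{I}_{a} \alpha \right) = T^{a} \wedge \mathrm{I}_{a} \alpha - e^{a} \wedge \mathrm{D} \left( \mathrm{I}_{a} \alpha \right).
\end{equation}
But $e^{a} \wedge \mathrm{I}_{a} \alpha = p\,\alpha$ by the counting identity, so the left-hand side equals $p\,\mathrm{D}\alpha$, and solving for the remaining term gives $e^{a} \wedge \mathrm{D}\left( \mathrm{I}_{a} \alpha \right) = T^{a} \wedge \mathrm{I}_{a}\alpha - p\,\mathrm{D}\alpha$. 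Adding the two contributions, the degree-dependent pieces $\left( p+1 \right)\mathrm{D}\alpha$ and $-p\,\mathrm{D}\alpha$ collapse to a single $\mathrm{D}\alpha$, leaving $e^{a} \wedge \mathcal{D}_{a}\alpha = \mathrm{D}\alpha + T^{a} \wedge \mathrm{I}_{a}\alpha$, which is exactly Eq.~(\ref{Eq_D-Da}) after rearrangement. Since $\alpha$ is arbitrary, the operator identity follows.

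The main obstacle I anticipate is not the algebraic bookkeeping of the two terms, which is short, but rather pinning down the counting identity $e^{a} \wedge \mathrm{I}_{a}\beta = \left( \deg\beta \right)\beta$ with the precise sign convention of Definition~\ref{def:In}: the prefactor $\left( -1 \right)^{d \left( p-1 \right) + \eta_{-}}$ in Eq.~(\ref{eq:Ia}) depends on the degree of the form on which $\mathrm{I}_{a}$ acts, and one must check that when $\mathrm{I}_{a}$ is applied to $e^{a} \wedge \beta$ the resulting sign is exactly the one that turns the Hodge-dual manipulation into a clean contraction. As a consistency check I would cross-validate the final result against Lemma~\ref{lem:Da-Nabla}, using $\mathcal{D}_{a} = \nabla_{a} + \mathrm{I}_{a} T^{b} \wedge \mathrm{I}_{b}$ to confirm that the torsion contribution reassembles into the single term $-T^{a} \wedge \mathrm{I}_{a}$; nevertheless I would favor the direct computation above, since it isolates precisely where the torsion two-form $T^{a} = \mathrm{D} e^{a}$ enters.
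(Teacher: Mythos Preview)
Your proof is correct. The paper does not supply an explicit proof of this lemma; it is stated as a direct consequence of the definitions, with the operators $\mathrm{I}_{a}$ and $\mathcal{D}_{a}$ referenced back to~\cite{Valdivia:2017sat}. Your direct verification via the degree-counting identity $e^{a}\wedge\mathrm{I}_{a}\beta=(\deg\beta)\,\beta$ together with the Leibniz rule for $\mathrm{D}$ is precisely the standard argument one would give, and the cancellation $(p+1)-p=1$ you exhibit is the heart of the matter. Your caution about the sign prefactor in Eq.~(\ref{eq:Ia}) is well placed but ultimately harmless: that prefactor is chosen exactly so that $\mathrm{I}_{a}$ coincides with the ordinary interior product $i_{\vec{e}_{a}}$, for which the counting identity holds without extra signs.
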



\begin{defn}
  \label{def:Dddag}
  We define the generalized covariant co\-derivative $\mathrm{D}^{\ddag}: \Omega^{p} \left( M \right) \to \Omega^{p-1} \left( M \right)$ by~\cite{Valdivia:2017sat}
  \begin{equation}
    \mathrm{D}^{\ddag} = -\mathrm{I}_{a} \mathrm{DI}^{a},
  \end{equation}
  where $\mathrm{I}_{a}$ is the coderivative operator introduced in Definition~\ref{def:In}, with $q=1$, and $\mathrm{D}$ stands for the Lorentz-covariant exterior derivative, $\mathrm{D} = \mathrm{d} + \omega$.
\end{defn}


\begin{defn}[Generalized de~Rham--Laplace wave operator]
  \label{def:GdRL}
  We define the \emph{generalized de~Rham--Laplace wave operator} $\blacksquare_{\text{dR}}: \Omega^{p} \left( M \right) \to \Omega^{p} \left( M \right)$ by~\cite{Valdivia:2017sat}
  \begin{equation}
    \blacksquare_{\mathrm{dR}} = \mathrm{D}^{\ddag} \mathrm{D} + \mathrm{DD}^{\ddag},
  \end{equation}
  where $\mathrm{D}^{\ddag}$ is the generalized covariant coderivative introduced in Definition~\ref{def:Dddag} and $\mathrm{D}$ stands for the Lorentz-covariant exterior derivative, $\mathrm{D} = \mathrm{d} + \omega$.
\end{defn}


\begin{defn}[Generalized Beltrami--Laplace wave operator]
  \label{def:GBL}
  We define the \emph{generalized Beltrami--Laplace wave operator} $\blacksquare_{\mathrm{B}}: \Omega^{p} \left( M \right) \to \Omega^{p} \left( M \right)$ by~\cite{Valdivia:2017sat}
  \begin{equation}
    \blacksquare_{\mathrm{B}} = -\mathcal{D}_{a} \mathcal{D}^{a},
  \end{equation}
  where $\mathcal{D}_{a}$ is the derivative operator introduced in Definition~\ref{def:calD}.
\end{defn}


\begin{lem}
 \label{lem:GenWeitzen}
  The operators introduced in Definitions~\ref{def:GdRL} and~\ref{def:GBL} satisfy the following \emph{generalized Weitzenböck identity for an RC geometry}:
  \begin{equation}
    \blacksquare_{\mathrm{dR}} = \blacksquare_{\mathrm{B}} + \mathrm{I}_{a} \mathrm{D}^{2} \mathrm{I}^{a},
  \end{equation}
  where $\mathrm{I}_{a}$ is the coderivative operator introduced in Definition~\ref{def:In}, with $q=1$, and $\mathrm{D}$ stands for the Lorentz-covariant exterior derivative, $\mathrm{D} = \mathrm{d} + \omega$.
   The proof to this Lemma\footnote{The torsionless (pseudo-) Riemannian case of this Lemma has been known for a long time (see, e.g., Ref.~\cite[Ch.~V, Sec.~B.4]{choquet1977analysis}, Ref.~\cite[Ch.~6.3]{nla.cat-vn2659416} and Ref.~\cite{Bini:2003km}), but the original source of this result for torsionless geometries is unknown to the authors. In fact, we have not been able to find any evidence of it ever appearing in the work of the Austrian mathematician Roland Weitzenböck (1885--1955). If the reader knows the real origin of the Weitzenböck identity, we would be glad to be contacted and to learn about its actual attribution.} for the case of RC geometry was given in Refs.~\cite{Valdivia:2017sat,Barrientos:2019msu}.
\end{lem}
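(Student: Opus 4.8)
The plan is to reduce both wave operators to strings in the two elementary operators $\mathrm{I}_a$ and $\mathrm{D}$ and let the algebra collapse. First I would unfold $\blacksquare_{\mathrm{dR}}$ using Definitions~\ref{def:Dddag} and~\ref{def:GdRL}: with $\mathrm{D}^{\ddag} = -\mathrm{I}_a\mathrm{D}\mathrm{I}^a$ one gets
\begin{equation}
  \blacksquare_{\mathrm{dR}} = \mathrm{D}^{\ddag}\mathrm{D} + \mathrm{D}\mathrm{D}^{\ddag} = -\mathrm{I}_a\mathrm{D}\mathrm{I}^a\mathrm{D} - \mathrm{D}\mathrm{I}_a\mathrm{D}\mathrm{I}^a .
\end{equation}
Next I would unfold $\blacksquare_{\mathrm{B}}$ from Definitions~\ref{def:calD} and~\ref{def:GBL}, substituting $\mathcal{D}_a = \mathrm{I}_a\mathrm{D} + \mathrm{D}\mathrm{I}_a$ and expanding the four products,
\begin{equation}
  \blacksquare_{\mathrm{B}} = -\mathcal{D}_a\mathcal{D}^a = -\mathrm{I}_a\mathrm{D}\mathrm{I}^a\mathrm{D} - \mathrm{I}_a\mathrm{D}^2\mathrm{I}^a - \mathrm{D}\mathrm{I}_a\mathrm{I}^a\mathrm{D} - \mathrm{D}\mathrm{I}_a\mathrm{D}\mathrm{I}^a .
\end{equation}

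Subtracting the two expansions, the mixed strings $\mathrm{I}_a\mathrm{D}\mathrm{I}^a\mathrm{D}$ and $\mathrm{D}\mathrm{I}_a\mathrm{D}\mathrm{I}^a$ cancel term by term, irrespective of how $\mathrm{D}$ treats the contracted index, leaving
\begin{equation}
  \blacksquare_{\mathrm{dR}} - \blacksquare_{\mathrm{B}} = \mathrm{I}_a\mathrm{D}^2\mathrm{I}^a + \mathrm{D}\,\mathrm{I}_a\mathrm{I}^a\,\mathrm{D} .
\end{equation}
The claimed identity therefore reduces to the single algebraic statement $\mathrm{I}_a\mathrm{I}^a = 0$ on every $\Omega^p\left(M\right)$, which would annihilate the last term.

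The hard part will be establishing this vanishing, and it is entirely a matter of controlling the degree-dependent signs of Definition~\ref{def:In}. I would prove it by showing that $\mathrm{I}^a\mathrm{I}^b$ is antisymmetric in $\left(a,b\right)$. Composing two copies of the $q=1$ operator of Eq.~(\ref{eq:Ia}) on a $p$-form, the interior $\hodge\hodge$ falls on a form of degree $d-p+1$ and contributes only the scalar $\left(-1\right)^{\left(d-p+1\right)\left(p-1\right)+\eta_{-}}$; collecting all prefactors gives
\begin{equation}
  \mathrm{I}^a\mathrm{I}^b = \left(-1\right)^{n}\hodge\left(e^a\wedge e^b\wedge\hodge\right.,
\end{equation}
where the exponent $n = n\left(d,p,\eta_{-}\right)$ depends on the dimension, the form degree, and the signature, but not on the frame indices. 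Because $e^a\wedge e^b = -\,e^b\wedge e^a$, the composite $\mathrm{I}^a\mathrm{I}^b$ is antisymmetric under $a\leftrightarrow b$, exactly as the ordinary interior products satisfy $\iota^a\iota^b = -\iota^b\iota^a$. Contracting against the symmetric frame metric then forces $\mathrm{I}_a\mathrm{I}^a = \eta_{ab}\mathrm{I}^b\mathrm{I}^a = 0$, the term $\mathrm{D}\,\mathrm{I}_a\mathrm{I}^a\,\mathrm{D}$ drops out, and we are left with $\blacksquare_{\mathrm{dR}} = \blacksquare_{\mathrm{B}} + \mathrm{I}_a\mathrm{D}^2\mathrm{I}^a$. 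As a closing remark I would note that, since $\mathrm{D}^2$ acts algebraically through the Lorentz curvature two-form $R^{ab}$, the surviving term $\mathrm{I}_a\mathrm{D}^2\mathrm{I}^a$ is precisely the curvature correction one expects in a Weitzenböck-type identity; no torsionless assumption entered anywhere, so the same derivation holds verbatim on the full RC geometry.
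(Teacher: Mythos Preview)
Your argument is correct. The paper itself does not supply a proof of this Lemma in the text; it only cites Refs.~\cite{Valdivia:2017sat,Barrientos:2019msu} for the RC case, so there is no in-paper derivation to compare against line by line. Your route---expand both wave operators in the elementary strings built from $\mathrm{I}_{a}$ and $\mathrm{D}$, observe that the cross-terms $\mathrm{I}_{a}\mathrm{D}\mathrm{I}^{a}\mathrm{D}$ and $\mathrm{D}\mathrm{I}_{a}\mathrm{D}\mathrm{I}^{a}$ appear identically in both and hence cancel, and kill the residual $\mathrm{D}\,\mathrm{I}_{a}\mathrm{I}^{a}\,\mathrm{D}$ via the antisymmetry of $\mathrm{I}^{a}\mathrm{I}^{b}$---is sound and is in fact the natural proof. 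The key algebraic input, $\left\{ \mathrm{I}_{a},\mathrm{I}_{b}\right\}=0$, is exactly what the paper later records as Eq.~(\ref{Eq_Super(I,I)}) in Theorem~\ref{thm:superalg}; you derive it independently from the Hodge-star definition, which makes your argument self-contained and actually more explicit than what the paper offers. One small stylistic remark: since the paper places Theorem~\ref{thm:superalg} after this Lemma, your choice to prove the antisymmetry directly from Definition~\ref{def:In} rather than invoking Eq.~(\ref{Eq_Super(I,I)}) is the right one to avoid circularity.
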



\begin{lem}
\label{lem:Com_D_Hodge}
  The $\mathcal{D}_{a}$ derivative introduced in Definition~\ref{def:calD} satisfies the following useful commutation relation with the Hodge star operator:
  \begin{equation}
    \left[ \mathcal{D}_{a}, \hodge \right] = \mathrm{I}_{a} T^{b} \wedge \mathrm{I}_{b} \hodge.
  \end{equation}
\end{lem}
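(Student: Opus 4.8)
The plan is to peel the torsion-free part off of $\mathcal{D}_a$ and show that all of its failure to commute with $\hodge$ is carried by the torsion. Using Lemma~\ref{lem:Da-Nabla} I would write $\mathcal{D}_a = \nabla_a + \mathrm{I}_a T^b\wedge\mathrm{I}_b$, so that $[\mathcal{D}_a,\hodge] = [\nabla_a,\hodge] + [\mathrm{I}_a T^b\wedge\mathrm{I}_b,\hodge]$, and then argue that the first bracket vanishes while the second reproduces the stated right-hand side.

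First I would establish that $[\nabla_a,\hodge]=0$. This is the conceptually clarifying step, though computationally light: the Hodge operator of Definition~\ref{def:hodge} is built solely from the metric and the Levi-Civita tensor $\sqrt{\left\vert g\right\vert}\,\epsilon_{\mu_1\cdots\mu_d}$, both of which are annihilated by any metric-compatible connection, torsion notwithstanding. Indeed $\nabla_\mu\sqrt{\left\vert g\right\vert}=0$ and $\nabla_\mu\left(\sqrt{\left\vert g\right\vert}\,\epsilon_{\nu_1\cdots\nu_d}\right)=0$ (equivalently, the tetrad postulate $\nabla_\mu e^a{}_\nu=0$ holds), so $\nabla_\mu$ slides through the star acting on the form indices; contracting with $e_a{}^\mu$ gives $\nabla_a\hodge=\hodge\nabla_a$. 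It is worth stressing the contrast with the covariant exterior derivative $\mathrm{D}$, which does \emph{not} commute with $\hodge$; the discrepancy is purely torsional, and that is exactly what survives in $\mathcal{D}_a$.

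It then remains to evaluate $[\mathrm{I}_a T^b\wedge\mathrm{I}_b,\hodge]$ and show it reduces to $\mathrm{I}_a T^b\wedge\mathrm{I}_b\hodge$. For this I would use the algebraic $\mathrm{I}_a$--$\hodge$ relations that follow directly from Definition~\ref{def:In}, schematically $\hodge\,\mathrm{I}_a = \pm\,e_a\wedge\hodge$ and $\mathrm{I}_a\hodge = \pm\,\hodge\left(e_a\wedge\,\cdot\,\right)$, with the signs fixed by the form degree, the dimension $d$, and the signature $\eta_-$. Since $T^b$ is a fixed two-form, $\mathrm{I}_a T^b$ is an ordinary one-form that commutes with $\hodge$, so the task is to commute the trailing $\mathrm{I}_b$ and the star through the wedge-and-contraction structure and collect signs. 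The hard part will be precisely this sign bookkeeping: both $\hodge$ and $\mathrm{I}_a$ are degree-graded, so the identity is cleanest to verify degree by degree, or, alternatively, by conjugating $\mathcal{D}_a=\left\{\mathrm{I}_a,\mathrm{D}\right\}$ by $\hodge$ and using $\hodge\,\mathrm{I}_a\hodge^{-1}=\pm\,e_a\wedge$ together with the covariant codifferential $\mathrm{D}^{\ddag}$ of Definition~\ref{def:Dddag}.

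As an orienting consistency check I would first dispatch the scalar case. For a $0$-form $f$ the contributions involving $\mathrm{d}f$ cancel between $\mathcal{D}_a\hodge f$ and $\hodge\mathcal{D}_a f$, leaving $[\mathcal{D}_a,\hodge]f = f\,\mathrm{D}\hodge e_a$; using $\mathrm{D}\hodge e_a = \tfrac{1}{2}\,\epsilon_{abcd}\,T^b\wedge e^c\wedge e^d$ this collapses to the torsion-trace term proportional to $T^b{}_{ab}$, which matches $\mathrm{I}_a T^b\wedge\mathrm{I}_b\hodge f$ and fixes the normalization before the general-degree bookkeeping is carried out.
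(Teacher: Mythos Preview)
The paper does not supply a proof of this lemma, so there is no argument of the authors to compare against; I assess your proposal on its own merits.

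Your first two moves are sound: the decomposition $\mathcal{D}_a=\nabla_a+K$ with $K=\mathrm{I}_aT^b\wedge\mathrm{I}_b$ is exactly Lemma~\ref{lem:Da-Nabla}, and $[\nabla_a,\hodge]=0$ does hold for any metric-compatible connection, torsion notwithstanding. The gap is in the third step. Saying that $[K,\hodge]$ ``reduces to $K\hodge$'' is literally the statement $\hodge K=0$, and this is false on every form degree $\geq 1$: for a one-form $\alpha=\alpha_c e^c$ one finds $K\alpha=T^{b}{}_{ac}\alpha_b\,e^c$, so $\hodge K\alpha=T^{b}{}_{ac}\alpha_b\,\hodge e^c\neq 0$ in general, and $\hodge$ is invertible. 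No amount of sign bookkeeping with the $\mathrm{I}_a$--$\hodge$ identities will remove that term; the obstruction is structural, not a phase. Crucially, your zero-form ``consistency check'' cannot detect this: since $\mathrm{I}_bf=0$, one has $Kf=0$ identically, so both $K\hodge f$ and $\hodge Kf$ are already compatible and the scalar case is vacuous for precisely the step you need to justify.

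If the splitting route is to succeed, you must either exhibit a genuine cancellation between $\hodge K$ and something you have set to zero --- in particular, revisit whether $[\nabla_a,\hodge]$ really vanishes once the operator acts on Lorentz-\emph{tensor}-valued forms, where the $\omega$-action on free indices may enter --- or abandon the Lemma~\ref{lem:Da-Nabla} shortcut and work directly from the definition $\mathcal{D}_a=\{\mathrm{I}_a,\mathrm{D}\}$ together with the exact $\hodge$--$\mathrm{I}_a$ and $\hodge$--$\mathrm{D}$ relations. Either way, a concrete one-form computation (not just the scalar case) is the minimal sanity check before committing to a strategy.
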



Most importantly, the operators $\mathcal{D}_{a}$, $\mathrm{I}_{a}$, and $\mathrm{D}$ give rise to a \emph{superalgebra} of differential operators, where the curvature and torsion play the role of structure ``constants.''
This makes sense because $\mathrm{I}_{a}$ and $\mathrm{D}$ are odd (``fermionic'') operators,
\begin{align}
  \mathrm{D} & : \Omega^{p} \left( M \right) \to \Omega^{p+1} \left( M \right), \\
  \mathrm{I}_{a} & : \Omega^{p} \left( M \right) \to \Omega^{p-1} \left( M \right),
\end{align}
while $\mathcal{D}_{a}$ is an even (``bosonic'') operator,
\begin{equation}
  \mathcal{D}_{a} : \Omega^{p} \left( M \right) \to \Omega^{p} \left( M \right).
\end{equation}


\begin{thm}
  \label{thm:superalg}
  The operators $\mathcal{D}_{a}$, $\mathrm{I}_{a}$, and $\mathrm{D}$ (defined above) close on themselves, satisfying the following superalgebra:
  \begin{align}
    \left\{ \mathrm{I}_{a}, \mathrm{D} \right\} & = \mathcal{D}_{a},
    \label{Eq_Super(I,D)} \\
    \left\{ \mathrm{I}_{a}, \mathrm{I}_{b} \right\} & = 0,
    \label{Eq_Super(I,I)} \\
    \left\{ \mathrm{D}, \mathrm{D} \right\} & = 2\mathrm{D}^{2},
    \label{Eq_Super(D,D)} \\
    \left[ \mathrm{I}_{a}, \mathcal{D}_{b} \right] & = - \nwse{T}{c}{ab} \mathrm{I}_{c},
    \label{Eq_Super[Ia,Db]} \\
    \left[ \mathrm{D}, \mathcal{D}_{b} \right] & = \mathrm{D}^{2} \mathrm{I}_{b} - \mathrm{I}_{b}\mathrm{D}^{2},
    \label{Eq_Super[D,Db]} \\
    \left[ \mathcal{D}_{a}, \mathcal{D}_{b} \right] & = \mathrm{I}_{ab} \mathrm{D}^{2} +
    \mathrm{D}^{2} \mathrm{I}_{ab} + \mathrm{I}_{a} \mathrm{D}^{2} \mathrm{I}_{b} -
    \mathrm{I}_{b} \mathrm{D}^{2} \mathrm{I}_{a}
     \nonumber \\ &\quad - \left(
      \mathrm{D} \nwse{T}{c}{ab} \wedge \mathrm{I}_{c} + \nwse{T}{c}{ab} \mathcal{D}_{c}
    \right),
    \label{Eq_Super[Da,Db]}
  \end{align}
  where $\mathrm{D}^{2}$ acts not as a \emph{differential} operator but as one that, by virtue of the Bianchi identities, gives rise to terms proportional to the Lorentz curvature two-form; e.g., $\mathrm{D}^{2} e^{a} = \nwse{R}{a}{b} \wedge e^{b}$, $\mathrm{D}^{2} \nwse{R}{a}{b} = 0$, and $\mathrm{D}^{2} T^{a} = \nwse{R}{a}{b} \wedge T^{b}$.
\end{thm}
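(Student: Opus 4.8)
The plan is to regard Eqs.~\eqref{Eq_Super(I,D)}--\eqref{Eq_Super(D,D)} as the fundamental (anti)commutators among the odd generators $\mathrm{I}_a$ and $\mathrm{D}$, and to generate the remaining three relations from them by feeding these into the graded Jacobi identity—which holds automatically for the graded commutator of linear operators on the space of forms—together with a single genuinely geometric computation. Two of the fundamental relations are free: \eqref{Eq_Super(I,D)} is Definition~\ref{def:calD}, and \eqref{Eq_Super(D,D)} is trivial since $\{\mathrm{D},\mathrm{D}\}=2\mathrm{D}\mathrm{D}$. For \eqref{Eq_Super(I,I)} I would return to Definition~\ref{def:In} and observe that, writing $\mathrm{I}_a,\mathrm{I}_b$ as sign-corrected Hodge duals of wedging with $e^a,e^b$, the degree-dependent factors $(-1)^{d(p-1)+\eta_-}$ are encountered identically in the two orderings, so the anticommutator is controlled entirely by $e^a\wedge e^b=-e^b\wedge e^a$ and vanishes.

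For the commutator \eqref{Eq_Super[D,Db]} I would apply the graded Jacobi identity to the all-odd triple $(\mathrm{D},\mathrm{I}_b,\mathrm{D})$. Using $\{\mathrm{I}_b,\mathrm{D}\}=\mathcal{D}_b$ and $\{\mathrm{D},\mathrm{D}\}=2\mathrm{D}^2$, the three cyclic terms collapse to $-2[\mathrm{D},\mathcal{D}_b]-2[\mathrm{I}_b,\mathrm{D}^2]=0$, which is exactly $[\mathrm{D},\mathcal{D}_b]=\mathrm{D}^2\mathrm{I}_b-\mathrm{I}_b\mathrm{D}^2$, with $\mathrm{D}^2$ treated as an even operator. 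This step requires no information about torsion.

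Relation \eqref{Eq_Super[Ia,Db]} is the one place where Jacobi is not enough: running the identity on $(\mathrm{I}_a,\mathrm{I}_b,\mathrm{D})$ yields only the antisymmetry $[\mathrm{I}_a,\mathcal{D}_b]=-[\mathrm{I}_b,\mathcal{D}_a]$, so the actual value must be computed. Here I would invoke Lemma~\ref{lem:Da-Nabla} to write $\mathcal{D}_b=\nabla_b+\mathrm{I}_bT^c\wedge\mathrm{I}_c$ (with $\mathrm{I}_b$ acting on the torsion $T^c$). The covariant piece drops out because the frame is covariantly constant, $[\mathrm{I}_a,\nabla_b]=0$ (tetrad postulate); the torsional piece is handled with the antiderivation property of $\mathrm{I}_a$, giving $[\mathrm{I}_a,\mathrm{I}_bT^c\wedge\mathrm{I}_c]=(\mathrm{I}_a\mathrm{I}_bT^c)\wedge\mathrm{I}_c-(\mathrm{I}_bT^c)\wedge\{\mathrm{I}_a,\mathrm{I}_c\}$. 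The second term vanishes by \eqref{Eq_Super(I,I)}, and since $\mathrm{I}_a\mathrm{I}_bT^c=\nwse{T}{c}{ba}=-\nwse{T}{c}{ab}$ one is left precisely with $[\mathrm{I}_a,\mathcal{D}_b]=-\nwse{T}{c}{ab}\mathrm{I}_c$.

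Finally, for \eqref{Eq_Super[Da,Db]} I would exploit that $\mathcal{D}_a$ is even, so $[\mathcal{D}_a,\cdot]$ acts as a derivation of the super-bracket: $[\mathcal{D}_a,\{\mathrm{I}_b,\mathrm{D}\}]=\{[\mathcal{D}_a,\mathrm{I}_b],\mathrm{D}\}+\{\mathrm{I}_b,[\mathcal{D}_a,\mathrm{D}]\}$. Substituting \eqref{Eq_Super[Ia,Db]} and \eqref{Eq_Super[D,Db]} produces two pieces. The first, $\{-\nwse{T}{c}{ab}\mathrm{I}_c,\mathrm{D}\}$, reproduces the torsion terms $-(\mathrm{D}\nwse{T}{c}{ab}\wedge\mathrm{I}_c+\nwse{T}{c}{ab}\mathcal{D}_c)$ once $\mathrm{D}$ is allowed to differentiate the component $\nwse{T}{c}{ab}$; the second, $\{\mathrm{I}_b,\mathrm{I}_a\mathrm{D}^2-\mathrm{D}^2\mathrm{I}_a\}$, reassembles into $\mathrm{I}_{ab}\mathrm{D}^2+\mathrm{D}^2\mathrm{I}_{ab}+\mathrm{I}_a\mathrm{D}^2\mathrm{I}_b-\mathrm{I}_b\mathrm{D}^2\mathrm{I}_a$ upon using \eqref{Eq_Super(I,I)} and the identity $\mathrm{I}_{ab}=\mathrm{I}_b\mathrm{I}_a$ from Definition~\ref{def:In}. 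I expect this last assembly to be the main obstacle: one must track the degree-dependent signs in the Leibniz rule for $\mathrm{I}_a$ and $\mathrm{D}$, carefully distinguish the \emph{algebraic} action of $\mathrm{D}^2$ (curvature, via the Bianchi identities quoted in the theorem) from its differential action, and fix the $\mathrm{I}_a\mathrm{I}_b\leftrightarrow\mathrm{I}_{ab}$ sign convention so that the four curvature terms emerge with exactly the signs stated in \eqref{Eq_Super[Da,Db]}.
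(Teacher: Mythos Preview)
Your proposal is correct and follows essentially the same route as the paper: the paper's proof simply declares Eqs.~\eqref{Eq_Super(I,D)}--\eqref{Eq_Super[D,Db]} ``straightforward'' and then derives \eqref{Eq_Super[Da,Db]} by writing $[\mathcal{D}_a,\mathcal{D}_b]=[\mathcal{D}_a,\{\mathrm{D},\mathrm{I}_b\}]$ and invoking the super-Jacobi identity $\{\mathrm{D},[\mathrm{I}_b,\mathcal{D}_a]\}+[\mathcal{D}_a,\{\mathrm{D},\mathrm{I}_b\}]-\{\mathrm{I}_b,[\mathcal{D}_a,\mathrm{D}]\}=0$, which is exactly your derivation-form identity rewritten. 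Your extra details---obtaining \eqref{Eq_Super[D,Db]} from Jacobi on $(\mathrm{D},\mathrm{I}_b,\mathrm{D})$ and obtaining \eqref{Eq_Super[Ia,Db]} from Lemma~\ref{lem:Da-Nabla} plus the antiderivation property---are sound and simply flesh out what the paper leaves implicit.
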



\begin{proof}
  The (anti)commutation relations~(\ref{Eq_Super(I,D)})--(\ref{Eq_Super[D,Db]}) are all straightforward to prove. To show that Eq.~(\ref{Eq_Super[Da,Db]}) holds, it suffices to notice that $\left[ \mathcal{D}_{a}, \mathcal{D}_{b} \right] = \left[  \mathcal{D}_{a}, \left\{ \mathrm{D}, \mathrm{I}_{b} \right\} \right]$ and to use the super-Jacobi identity
  \begin{equation}
    \left\{ \mathrm{D}, \left[ \mathrm{I}_{b}, \mathcal{D}_{a} \right] \right\} +
    \left[ \mathcal{D}_{a}, \left\{ \mathrm{D}, \mathrm{I}_{b} \right\} \right] -
    \left\{ \mathrm{I}_{b}, \left[ \mathcal{D}_{a}, \mathrm{D} \right] \right\} = 0.  
  \end{equation}
\end{proof}


In as few words as possible, and at the risk of glossing over some important subtleties, one may say that the study of GWs in the context of RC geometry is very similar to the standard Riemannian case, but using the new derivative $\mathcal{D}_{a}$ instead of the standard torsionless spacetime covariant derivative $\mathring{\nabla}_{\mu}$.


\subsection{Lorentz-covariant Lie derivative}
\label{sec:LCLD}

The usual Lie derivative is not Lorentz covariant.
For instance, while the vielbein transforms as a vector under local Lorentz transformations (LLT), its Lie derivative does not.
Since LLTs are an essential part of our construction [e.g., the Lagrangian~(\ref{Eq_Lagrangian-0}) is invariant under this gauge symmetry], we define a modified Lorentz-covariant version of the Lie derivative that fixes this problem.


\begin{defn}[Cartan's formula]
  \label{def:LieDer}
  When acting on a differential $p$-form, the Lie derivative operator along a vector field $\vec{\xi}$ is given by \emph{Cartan's formula}~\cite{Nakahara:2016},
  \begin{equation}\label{cartanformula}
    \pounds_{\xi} = \mathrm{I}_{\xi} \mathrm{d} + \mathrm{dI}_{\xi},
  \end{equation}
  where $\mathrm{I}_{\xi}$ is the \emph{contraction} operator.%
\footnote{Also called the \emph{interior product} and denoted by $\imath_{\xi}$ or $\xi \lrcorner$.
For our purposes, it proves most convenient to write $\mathrm{I}_{\xi}$ as [cf.~Eq.~(\ref{eq:Ia})] $\mathrm{I}_{\xi} = \left(-1\right)^{d\left(p-1\right)+\eta_{-}} \hodge \left( \xi \wedge \hodge \right.$, where $\xi = \xi_{\mu} \mathrm{d} x^{\mu}$ is the one-form dual to the vector field $\vec{\xi} = \xi^{\mu} \partial_{\mu}$.}
\end{defn}


\begin{defn}[Lorentz-covariant Lie derivative]
  \label{def:LCLD}
  When acting on a differential $p$-form that behaves as a \emph{tensor} under LLTs, the \emph{Lorentz-covariant Lie derivative} operator along a vector field $\vec{\xi}$ is given by the formula~\cite{Hehl:1994ue,Obukhov:2006ge,Obukhov:2007se,Obukhov:2007sh,Corral:2018hxi}
  \begin{equation}
    \mathfrak{L}_{\xi} = \mathrm{I}_{\xi} \mathrm{D} + \mathrm{DI}_{\xi},
  \end{equation}
  where $\mathrm{D}$ is the Lorentz-covariant exterior derivative.
  On the other hand, the Lorentz-covariant Lie derivative of the spin connection one-form $\omega^{ab}$ is defined as
  \begin{equation}
    \mathfrak{L}_{\xi} \omega^{ab} = \mathrm{I}_{\xi} R^{ab},
  \end{equation}
  where $R^{ab}$ is the Lorentz curvature two-form.
\end{defn}


For instance, the Lorentz-covariant Lie derivatives of the vielbein $e^{a}$ and the scalar field $\phi$ respectively read
\begin{align}
  \mathfrak{L}_{\xi} e^{a} & =
  \left( \mathrm{I}_{\xi} \mathrm{D} + \mathrm{DI}_{\xi} \right) e^{a} =
  \mathrm{I}_{\xi} T^{a} + \mathrm{D} \xi^{a}, \\
  \mathfrak{L}_{\xi} \phi & =
  \left( \mathrm{I}_{\xi} \mathrm{D} + \mathrm{DI}_{\xi} \right) \phi =
   \mathrm{I}_{\xi} \mathrm{d} \phi.
\end{align}
It is clear that, when acting on $p$-forms that behave as a scalar under LLTs, the Lorentz-covariant Lie derivative reduces to the standard one given by Cartan's formula~\eqref{cartanformula}. 

One can check directly that
\begin{align}
  \pounds_{\xi} e^{a} & =
  \mathfrak{L}_{\xi} e^{a} + \nwse{\lambda}{a}{b} e^{b}, \\
  \pounds_{\xi} \phi & =
  \mathfrak{L}_{\xi} \phi, \\
  \pounds_{\xi} \omega^{ab} & =
  \mathfrak{L}_{\xi} \omega^{ab} - \mathrm{D} \lambda^{ab},
\end{align}
where $\lambda^{ab} = -\mathrm{I}_{\xi} \omega^{ab}$ plays the role of an infinitesimal local Lorentz parameter. This means that the difference between the usual Lie derivative and its Lorentz-covariant version amounts to an infinitesimal LLT.

The Lorentz-covariant Lie derivative is the suitable operator to define black hole entropy as the Noether charge at the horizon in the first-order formalism, since the standard Lie derivative does not produce the correct transformation law for the vierbein at the bifurcation surface~\cite{Jacobson:2015uqa}.


\section{Perturbations on a Riemann--Cartan geometry and Gauge Fixing}
\label{Sec_Gauge}

\subsection{Lie draggings and Lorentz transformations}
\label{sec:LieLorentz}

In the usual torsionless GW treatment, it proves useful to define the trace-reversed version of the metric perturbation,
\begin{equation}
  \bar{h}_{\mu\nu} = h_{\mu\nu} - \frac{1}{2} h g_{\mu\nu},
\end{equation}
and then to perform a wisely chosen infinitesimal diffeomorphism on the metric,
\begin{equation}
  g_{\mu\nu} \mapsto g_{\mu\nu} +
  \mathring{\nabla}_{\mu} \xi_{\nu} +
  \mathring{\nabla}_{\nu} \xi_{\mu},
\end{equation}
in order to arrive at the Lorenz gauge-fixing condition,
\begin{equation}
  \mathring{\nabla}_{\mu} \bar{h}^{\mu\nu} = 0.
  \label{eq:LGF}
\end{equation}
It is not trivial to generalize this procedure for the case of RC geometry.
A first generalization was put forward in Ref.~\cite{Valdivia:2017sat}, but, while correct, it proved to be far from the best choice: the final result was a cumbersome inhomogeneous GW equation with many torsional couplings.
In this section, we use the $\mathcal{D}_{a}$ derivative (see Definition~\ref{def:calD} in Sec.~\ref{Sec_Math}) to provide a generalized Lorenz gauge fixing in an optimal way.

An \emph{infinitesimal Lie dragging} (LD) on the fields of the theory corresponds to
\begin{align}
  \text{LD}: \left\lbrace
    \begin{aligned}
      \delta_{\text{LD}} \left( \xi \right) e^a
        &= \mathfrak{L}_\xi e^a, \\
      \delta_{\text{LD}} \left( \xi \right) \omega^{ab}
        &= \mathfrak{L}_\xi \omega^{ab}, \\
      \delta_{\text{LD}} \left( \xi \right) \phi
        &= \mathfrak{L}_\xi \phi, 
    \end{aligned}
  \right.
\end{align}
where $\mathfrak{L}_{\xi}$ denotes the \emph{Lorentz-covariant Lie derivative} operator along a vector field $\vec{\xi}$ (see Definition~\ref{def:LCLD} in Sec.~\ref{sec:LCLD}).
Since the Lagrangian~(\ref{Eq_Lagrangian-0}) is Lorentz invariant, we have that
\begin{align}
  \mathfrak{L}_{\xi} L & =
  \mathrm{dI}_{\xi} L
  \nonumber \\ & =
  \mathfrak{L}_{\xi} e^{a} \wedge \mathcal{E}_{a} +
  \mathfrak{L}_{\xi} \omega^{ab} \wedge \mathcal{E}_{ab} +
  \mathfrak{L}_{\xi} \phi \bar{\mathcal{E}} +
  \mathfrak{L}_{\xi} \bar{\phi} \mathcal{E} 
  \nonumber \\ &\quad +
  \mathrm{d} \left(
    \mathfrak{L}_{\xi} \omega^{ab} \wedge \mathcal{B}_{ab} +
    \mathfrak{L}_{\xi} \phi \bar{\mathcal{B}} +
    \mathfrak{L}_{\xi} \bar{\phi} \mathcal{B}
  \right).
\end{align}
From this result we conclude that in a Lorentz-invariant Lagrangian, the only important piece of the LD is the one described by the $\mathfrak{L}_{\xi}$ operator. Additionally, under an infinitesimal LLT, the fields transform according to
\begin{align}
  \text{LLT}: \left\lbrace
    \begin{aligned}
      \delta_{\text{LLT}} \left( \lambda \right) e^a
        &= \lambda^{a}{}_b e^b,\\
      \delta_{\text{LLT}} \left( \lambda \right) \omega^{ab}
        &= - \mathrm{D}\lambda^{ab},\\
      \delta_{\text{LLT}} \left( \lambda \right) \phi
        &=0.
    \end{aligned}
  \right.
\end{align}

The commutator of the infinitesimal LDs and LLTs, once applied to any gravitational field, form the Lie algebra 
\begin{align}
 \left[
   \delta_{\text{LLT}} \left( \lambda_1 \right),
   \delta_{\text{LLT}} \left( \lambda_2 \right)
 \right] &=
   \delta_{\text{LLT}} \left( \lambda_3 \right),
 \\
 \left[
   \delta_{\text{LLT}} \left( \lambda \right),
   \delta_{\text{LD}} \left( \xi \right)
 \right] &=
   \delta_{\text{LD}} \left( \tilde{\xi} \right),
 \\
 \left[
   \delta_{\text{LD}} \left( \xi_1 \right),
   \delta_{\text{LD}} \left( \xi_2 \right)
 \right] &=
   \delta_{\text{LLT}} \left( \bar{\lambda} \right) +
   \delta_{\text{LD}} \left( \bar{\xi} \right),
\end{align}
where we have defined $\lambda_{3}^{ab} = \lambda_{1}^{a}{}_c \lambda_{2}^{cb} - \lambda_{2}^{a}{}_c \lambda_{1}^{cb}$, $\tilde{\xi}^a = \lambda^{a}{}_b\xi^b$, $\bar{\lambda}^{ab} = \mathrm{I}_{\xi_2}\mathrm{I}_{\xi_1} R^{ab}$, and $\bar{\xi}^a = \mathrm{I}_{\xi_1}\mathrm{I}_{\xi_2} T^a$. The commutator between two LDs shows that curvature and torsion appear as ``structure functions'' of the algebra. Remarkably, this algebra closes off shell regardless of the dimensionality of the spacetime, its internal group, the field content of the theory, and even in cases with restricted symmetries~\cite{Corral:2018hxi}. Furthermore, the invariance of the Lagrangian~\eqref{Eq_Lagrangian-0} under arbitrary LDs and LLTs implies the Noether identities
\begin{align}
 \Diff{\mathcal{E}_a} &= \mathrm{I}_a T^b\wedge\mathcal{E}_b + \mathrm{I}_a R^{bc}\wedge\mathcal{E}_{bc} + \mathcal{D}_a\phi\,\mathcal{E} + \mathcal{D}_a\bar{\phi}\,\bar{\mathcal{E}}, \\
 \Diff{\mathcal{E}_{ab}} &= e_{[a} \wedge \mathcal{E}_{b]},
\end{align}
respectively, which are also referred to as the contracted Bianchi identities. 

The $\mathfrak{L}_{\xi}$ operator is a well-defined Lorentz-covariant version of the Lie derivative operator, but it still includes some residual Lorentz freedom.
To see this, note that it is possible to write the action of $\mathfrak{L}_{\xi}$ on $e^{a}$ in terms of the $\mathcal{D}_{c}$ derivative as
\begin{equation}
  \mathfrak{L}_{\xi} e^{a} =
  \mathfrak{L}_{\xi}^{+} e^{a} +
  \mathfrak{L}_{\xi}^{-} e^{a},
\end{equation}
with
\begin{align*}
  \mathfrak{L}_{\xi}^{+} e_{a} & =
  \frac{1}{2} e^{b} \left[
    \xi^{c} \left(
      \mathrm{I}_{b} \mathcal{D}_{c} e_{a} +
      \mathrm{I}_{a} \mathcal{D}_{c} e_{b}
    \right) +
    \mathcal{D}_{b} \xi_{a} +
    \mathcal{D}_{a} \xi_{b}
  \right]
  \\ & =
  \frac{1}{2} \left(
    \mathring{\mathcal{D}}_{b} \xi_{a} +
    \mathring{\mathcal{D}}_{a} \xi_{b}
  \right)
  e^{b}
  \\ & =
  \frac{1}{2} \left(
    \mathring{\mathrm{D}} \xi_{a} +
    \mathring{\mathcal{D}}_{a} \xi
  \right),
\end{align*}
\begin{align*}
  \mathfrak{L}_{\xi}^{-} e_{a} & =
  \frac{1}{2} e^{b} \left[
    \xi^{c} \left(
      \mathrm{I}_{b} \mathcal{D}_{c} e_{a} -
      \mathrm{I}_{a} \mathcal{D}_{c} e_{b}
    \right) +
    \mathcal{D}_{b} \xi_{a} -
    \mathcal{D}_{a} \xi_{b}
  \right]
  \\ & =
  -\frac{1}{2} \left[
    \mathcal{D}_{a} \xi_{b} -
    \mathcal{D}_{b} \xi_{a} + \left(
      T_{abc} - T_{bac}
    \right)
    \xi^{c}
  \right] e^{b},
\end{align*}
and where $\xi = \xi_{\mu} \mathrm{d} x^{\mu} = \xi_{a} e^{a}$ is the one-form dual to the vector $\vec{\xi} = \xi^{\mu} \partial_{\mu} = \xi^{a} \vec{e}_{a}$.

Defining the antisymmetric parameter
\begin{equation}
  \tilde{\lambda}_{ab} = \frac{1}{2} \left[
    \mathcal{D}_{a} \xi_{b} -
    \mathcal{D}_{b} \xi_{a} + \left(
      T_{abc} - T_{bac}
    \right)
    \xi^{c}
  \right],
\end{equation}
it is clear that $\mathfrak{L}_{\xi} e^{a}$ contains a residual Lorentz transformation
\begin{equation}
  \mathfrak{L}_{\xi} e^{a} =
  \mathfrak{L}_{\xi}^{+} e^{a} -
  \nwse{\tilde{\lambda}}{a}{b} e^{b}.
\end{equation}
A similar residual Lorentz freedom is found when $\mathfrak{L}_{\xi}$ acts on $\omega^{ab}$ and $\phi$.

Therefore, we consider the final set of modified infinitesimal Lie draggings (MLD) given by $\mathfrak{L}_{\xi}$ and a counter-Lorentz transformation,
\begin{align}
  \text{MLD}: \left\lbrace
    \begin{aligned}
      \delta_{\text{MLD}}(\xi) e^a &= \mathfrak{L}_\xi^{+} e^a =
        \mathfrak{L}_\xi e^a + \tilde{\lambda}^{a}{}_b e^b,
      \\
      \delta_{\text{MLD}}(\xi) \omega^{ab} &= \mathfrak{L}^{+}_\xi \omega^{ab} =
        \mathfrak{L}_\xi\omega^{ab} - \mathrm{D} \tilde{\lambda}^{ab},
      \\
      \delta_{\text{MLD}}(\xi)\phi &= \mathfrak{L}^{+}_{\xi} \phi =
        \mathfrak{L}_{\xi} \phi,
    \end{aligned}
  \right.
\end{align}
with
\begin{align}
  \mathfrak{L}_{\xi}^{+} e_{a} & =
  \frac{1}{2} \left(
    \mathring{\mathrm{D}} \xi_{a} +
    \mathring{\mathcal{D}}_{a} \xi
  \right),
  \\
  \mathfrak{L}_{\xi}^{+} \omega^{ab} & =
  \mathrm{I}_{\xi} R^{ab} -
  \frac{1}{2} \mathrm{D} \left[
    \mathcal{D}^{a} \xi^{b} -
    \mathcal{D}^{b} \xi^{a} + \left(
      T^{abc} - T^{bac}
    \right)
    \xi_{c}
  \right],
  \\
  \mathfrak{L}_{\xi}^{+} \phi & = \mathrm{I}_{\xi} \mathrm{d} \phi.
\end{align}
This is the set of transformations we will use to generalize the standard gauge fixing of GWs.


\subsection{Lie draggings vs. gauge transformations}
\label{sec:rant}

Before moving on, we would like to point out a common misunderstanding regarding the interpretation of an infinitesimal LD as a harmless ``gauge transformation'' on the fields. First, the Lagrangian, the vierbein, the spin connection, and the scalar field, although invariant under diffeomorphisms by virtue of being differential forms, transform nontrivially under infinitesimal LDs.
Second, diffeomorphism invariance is not a gauge symmetry in the sense that there is no principal bundle involved (in sharp contrast to the local Lorentz symmetry). Third, as shown in Sec.~\ref{sec:LCLD}, the Lie derivative of a Lorentz-tensor $p$-form does not transform covariantly under LLTs. Therefore, it is certainly more suitable to take the LDs and LLTs as the fundamental symmetries of the theory in the first-order formalism.  

Given a well-behaved theory for a field $\psi$, with field equations written symbolically as
$\mathcal{E} \left( \psi \right) = 0$,
we have that
\begin{equation}
  \mathcal{E} \left( \psi + \mathfrak{L}_{\xi} \psi\right) =
  \mathcal{E} \left( \psi \right) +
  \mathfrak{L}_{\xi} \mathcal{E} \left( \psi \right) = 0.
\end{equation}
This means that it is possible to map in an invertible way an on-shell configuration into a different one satisfying some practical condition we are interested in: the gauge fixing. Thus, solving the field equations for the latter is equivalent to solving them for the former, and it is only in this restricted sense that an LD can be identified with a gauge transformation.


\subsection{Perturbations on a Riemann--Cartan geometry}
\label{sec:Perturbations}

Infinitesimal LDs act on perturbations on the RC geometry in a way similar to the standard Riemannian case.
In Ref.~\cite{Izaurieta:2019dix}, it was shown that they can be described up to second order through the perturbations in the vierbein and the spin connection given by
\begin{align}
  e^{a} \mapsto \bar{e}^{a} & =
  e^{a} + \frac{1}{2} H^{a},
  \label{Eq_Perturb_Vierbein} \\
  \omega^{ab} \mapsto \bar{\omega}^{ab} & =
  \omega^{ab} + U^{ab} \left( H, \partial H \right) + V^{ab}.
  \label{Eq_Perturb_Spin_Connection}
\end{align}
Here, $H^{a} = \nwse{H}{a}{\mu} \mathrm{d}x^{\mu}$ is a one-form describing the vierbein perturbation, which is related to the canonical metric perturbation
$g_{\mu\nu} \mapsto g_{\mu\nu} + h_{\mu\nu}$ through
\begin{align}
  \nwse{H}{a}{\mu} & = \nwse{e}{a}{\rho} \left(
    \nwse{h}{\rho}{\nu} - \frac{1}{4} \nwse{h}{\rho}{\mu} \nwse{h}{\mu}{\nu} +
    \frac{1}{8} \nwse{h}{\rho}{\lambda} \nwse{h}{\lambda}{\mu} \nwse{h}{\mu}{\nu} + \cdots
  \right), \\
  h_{\mu\nu} & = \left(
    \nwse{e}{a}{\mu} + \frac{1}{4} \nwse{H}{a}{\mu}
  \right) H_{a\nu}.
\end{align}
Without loss of generality, its orthonormal-frame components can be taken as symmetric, $H_{ba} = H_{ab}$ for any Lorentz-invariant theory~\cite{Izaurieta:2019dix}.

The perturbation on the spin connection comes in two pieces, $U_{ab} \left( H, \partial H \right)  $ and $V_{ab}$.
The one-form $U_{ab}$ can be written in terms of $H^{a}$ as $U_{ab} = U_{ab}^{\left( 1 \right)} + U_{ab}^{\left( 2 \right)} + \mathcal{O} \left(  H^{3}\right)$, where
\begin{align}
  U_{ab}^{\left( 1 \right)} & = -\frac{1}{2} \left(
    \mathrm{I}_{a} \mathrm{D} H_{b} -
    \mathrm{I}_{b} \mathrm{D} H_{a}
  \right),
  \label{Eq_Perturb_U1}
  \\
  U_{ab}^{\left( 2 \right)} & =
  \frac{1}{8} \mathrm{I}_{ab} \left(
    \mathrm{D} H_{c} \wedge H^{c}
  \right)
   \nonumber \\ &\quad
  - \frac{1}{2} \left[
    \mathrm{I}_{a} \left( U_{bc}^{\left( 1 \right)} \wedge H^{c} \right) -
    \mathrm{I}_{b} \left( U_{ac}^{\left( 1 \right)} \wedge H^{c} \right)
  \right].
  \label{Eq_Perturb_U2}
\end{align}
The one-form $V_{ab}$ corresponds to a purely torsional perturbation mode, independent of $H^{a}$.

In terms of the perturbations, torsion and curvature behave as
\begin{align}
  T_{a} & \mapsto \bar{T}_{a} = T_{a} +
  T_{a}^{\left( 1 \right)} + T_{a}^{\left( 2 \right)}, \\
  R^{ab} & \mapsto \bar{R}^{ab} = R^{ab} +
  R_{\left( 1 \right)}^{ab} + R_{\left( 2 \right)}^{ab},
\end{align}
with
\begin{align}
  T_{a}^{\left( 1 \right)} & =
  V_{ab} \wedge e^{b} - \frac{1}{2} \mathrm{I}_{a} \left( H^{b} \wedge T_{b} \right),
  \label{Eq_Perturb_T1} \\
  T_{a}^{\left( 2 \right)} & =
  \frac{1}{2} V_{ab} \wedge H^{b} +
  \frac{1}{4} \mathrm{I}_{a} \left[
    H^{c} \wedge \mathrm{I}_{c} \left( H^{b} \wedge T_{b} \right)
  \right],
  \label{Eq_Perturb_T2} \\
  R_{\left( 1 \right)}^{ab} & =
  \mathrm{D} U_{\left( 1 \right)}^{ab} + \mathrm{D} V^{ab},
  \label{Eq_Perturb_R1}
  \\
  R_{\left( 2 \right)}^{ab} & =
  \mathrm{D} U_{\left( 2 \right)}^{ab} +
  \left( U^{a}_{\left( 1 \right) c} + \nwse{V}{a}{c} \right) \wedge
  \left( U_{\left( 1 \right)}^{cb} + V^{cb} \right).
  \label{Eq_Perturb_R2}
\end{align}

We can always perform an MLD on the background and a GW perturbation simultaneously,
\begin{equation}
  e^{a} \mapsto \bar{e}^{a} = e^{a} +
  \mathfrak{L}_{\xi}^{+} e^{a} + \frac{1}{2} H^{a},
\end{equation}
defining a new GW as
\begin{equation}
  \frac{1}{2} H'^{a} = \frac{1}{2} H^{a} +
  \mathfrak{L}_{\xi}^{+} e^{a},
\end{equation}
and therefore we have what we might call the ``gauge transformation,''
\begin{equation}
  H_{a} \mapsto H'_{a} = H_{a} +
  \mathring{\mathrm{D}} \xi_{a} + \mathring{\mathcal{D}}_{a} \xi.
\end{equation}
It is possible to use this relation to prove that
\begin{equation}
  \mathcal{D}^{a} H'_{a} - \frac{1}{2} \mathrm{d} H' =
  \mathcal{D}^{a} \mathring{\mathcal{D}}_{a} \xi +
  \mathrm{I}^{a} \mathrm{D} \mathring{\mathrm{D}} \mathrm{I}_{a} \xi +
  \mathcal{D}^{a} H_{a} - \frac{1}{2} \mathrm{d} H,
\end{equation}
where $\xi = \xi_{\mu} \mathrm{d}x^{\mu}$ and $H = \nwse{H}{a}{a}$.
Since it is always possible to find a one-form field $\xi = \xi_{\mu} \mathrm{d}x^{\mu}$ satisfying
$\mathcal{D}^{a} \mathring{\mathcal{D}}_{a} \xi +
 \mathrm{I}^{a} \mathrm{D} \mathring{\mathrm{D}} \mathrm{I}_{a} \xi +
 \mathcal{D}^{a} H_{a} - \frac{1}{2} \mathrm{d} H = 0$,
we can always choose $H_{a}$ such that
\begin{equation}
  \mathcal{D}_{a} H^{a} - \frac{1}{2} \mathrm{d} H = 0.
  \label{Eq_Lorenz_Gauge}
\end{equation}
This is the RC-geometry generalization of the standard Lorenz gauge fixing $\mathring{\nabla}^\mu\bar{h}_{\mu\nu} = 0$ on the trace-reversed variable $\bar{h}_{\mu\nu} = h_{\mu\nu} - \frac{1}{2}g_{\mu\nu}h$ of standard Riemannian geometry.

In the following sections, we will use the mathematical tools we have developed in Sec.~\ref{Sec_Math} and the condition~(\ref{Eq_Lorenz_Gauge}) to study the propagation of GWs in the nonminimal GB-coupling case.


\section{Sizes and Frequencies}
\label{Sec_Scales}

Even in the standard torsionless case, it is a nontrivial task to separate GWs from the background geometry.
In general, one must consider an expansion in two kind of variables: amplitudes and frequencies.
A GW is well defined only in the regime when small and rapidly changing perturbations move over a slowly varying background.
We follow the approach of Ref.~\cite[Ch.~1.5]{Maggiore:1900zz} as closely as possible, but considering a nonvanishing torsion.

Let us normalize the analysis by choosing a vierbein of components
\begin{equation}
  \left\vert \nwse{e}{a}{\mu} \right\vert \sim 1,
\end{equation}
describing a slowly changing geometry over a characteristic length scale $L$.
Since the torsion two-form is given by
$T^{a} = \mathrm{d} e^{a} + \nwse{\omega}{a}{b} \wedge e^{b}$,
we infer that both torsion and the spin connection must be of magnitude 
\begin{align}
  \left\vert T^{a} \right\vert & \sim \frac{1}{L}, &
  \left\vert \omega^{ab} \right\vert & \sim \frac{1}{L},
\end{align}
while the Lorentz curvature turns out to be of magnitude
\begin{equation}
  \left \vert R^{ab} \right \vert \sim \frac{1}{L^{2}}.
\end{equation}

Let us label the amplitude scales for perturbations as%
\footnote{Later on we use $H$ to denote the trace $\nwse{H}{a}{a}$, which is of course unrelated to $H$ as the scale for metric perturbations. We can only hope that the reader will be able to tell one from the other according to context.}
\begin{align}
  \left\vert H^{a}  \right\vert & \sim H, &
  \left\vert V^{ab} \right\vert & \sim V,
\end{align}
with $H \ll 1$ and $V \ll 1$.
These perturbations change rapidly in the wavelength scales $\lambdabar_{H}$ and $\lambdabar_{V}$,
\begin{align}
  \left\vert \partial H^{a}  \right\vert & \sim \frac{H}{\lambdabar_{H}}, &
  \left\vert \partial V^{ab} \right\vert & \sim \frac{V}{\lambdabar_{V}}.
\end{align}
These wavelengths are small compared with the scale $L$ of the background geometry,
\begin{align}
  \epsilon_{H} & = \frac{\lambdabar_{H}}{L} \ll 1, &
  \epsilon_{V} & = \frac{\lambdabar_{V}}{L} \ll 1.
\end{align}

In order to relate the perturbation scales $H$ and $V$, let us observe that the torsion components are $1/L$ times smaller than those of the vierbein.
Since the perturbed torsion is given by
[cf.~Eqs.~(\ref{Eq_Perturb_T1})--(\ref{Eq_Perturb_T2})]
$\bar{T}_{a} = T_{a} + T_{a}^{\left(  1 \right)} + T_{a}^{\left( 2 \right)} + \mathcal{O} \left( H^{3} \right)$,
it is natural to expect $T_{a}^{\left( 1 \right)}$ and $T_{a}^{\left( 2 \right)}$ to be also $1/L$ times smaller than the vierbein perturbations,
\begin{align}
\left\vert T_{a}^{\left( 1 \right)} \right\vert & \sim \frac{H}{L}, &
\left\vert T_{a}^{\left( 2 \right)} \right\vert & \sim \frac{H^{2}}{L}.
\end{align}
This is the same as requiring the perturbation scales of $V^{ab}$ and $H^{a}$ to be related by
\begin{equation}
  V \sim \frac{H}{L},
  \label{Eq_V=H_sobre_L}
\end{equation}
meaning that the torsional modes are much weaker than the metric ones.

The curvature perturbations (\ref{Eq_Perturb_R1})--(\ref{Eq_Perturb_R2}) include $1/\epsilon^{2}$ and $1/\epsilon$ terms on each order,
\begin{align}
  \left\vert \mathrm{D} U_{\left( 1 \right)}^{ab} \right\vert & \sim
  \frac{1}{L^{2}} \frac{H}{\epsilon_{H}^{2}}, \\
  \left\vert \mathrm{D} V^{ab} \right\vert & \sim
  \frac{1}{L^{2}} \frac{H}{\epsilon_{V}}, \\
  \left\vert
    \mathrm{D} U_{\left( 2 \right)}^{ab} +
    U^{a}_{\left( 1 \right) c} \wedge U_{\left( 1 \right)}^{cb}
  \right\vert & \sim
  \frac{1}{L^{2}} \frac{H^{2}}{\epsilon_{H}^{2}}, \\
  \left\vert
    U^{a}_{\left( 1 \right) c} \wedge V^{cb} +
    \nwse{V}{a}{c} \wedge U_{\left( 1 \right)}^{cb}
  \right\vert & \sim
  \frac{1}{L^{2}} \frac{H^{2}}{\epsilon_{H}}, \\
  \left\vert \nwse{V}{a}{c} \wedge V^{cb} \right\vert & \sim
  \frac{H^{2}}{L^{2}}.
\end{align}

At this point, since Eq.~(\ref{Eq_EOM_Vierbein_EH}) has the same form as the canonical Einstein--Hilbert equations, and since the leading terms in the expansions are the metric modes, much of the analysis goes exactly as in the standard case.
The field equations must be split into low- and high-frequency pieces.
From the low-frequency piece it is straightforward to prove that
\begin{equation}
  H \ll \epsilon_{H} \ll 1,
  \label{Eq_H comp_epsilon_comp_1}
\end{equation}
and taking this into consideration, the high-frequency piece of Eq.~(\ref{Eq_EOM_Vierbein_EH}) to leading and subleading orders corresponds just to
\begin{equation}
  \epsilon_{abcd} R_{\left( 1 \right)}^{ab} \wedge e^{c} = 0.
  \label{Eq_High_Frequency}
\end{equation}
In Sec.~\ref{Sec_GW_Speed}, we analyze the behavior of GWs and torsional modes predicted by Eq.~(\ref{Eq_High_Frequency}).


\section{The dispersion relation, speed and polarization of gravitational waves}
\label{Sec_GW_Speed}

\subsection{The wave equation and torsional obstruction to the transverse-traceless gauge}

The left-hand side of Eq.~\eqref{Eq_High_Frequency} can be written as 
(see Appendix~\ref{Apendice} for the algebraic details)
 \begin{equation}
  \epsilon_{abcn} R_{\left( 1 \right)}^{ab} \wedge e^{c} =
  \left( \mathrm{I}_{n} W_{m} - \frac{1}{2} \eta_{mn} \mathrm{I}_{p} W^{p} \right) \hodge e^{m}, \label{Eq_Pert1=W-W/2}
\end{equation}
where
\begin{equation}
  W_{m} = -\mathcal{D}_{a} \mathcal{D}^{a} H_{m} +
  \left[ \mathcal{D}_{a}, \mathcal{D}_{m} \right] H^{a} +
  2\mathrm{I}_{a} \mathrm{D} V^{a}{}_{m}.
\end{equation}
It is clear that Eq.~(\ref{Eq_High_Frequency}) and Eq.~(\ref{Eq_Pert1=W-W/2}) imply $W_{m}=0$ as the equation for GWs, and therefore from now on the equation
\begin{equation}
  -\mathcal{D}_{a} \mathcal{D}^{a} H_{m} +
  \left[ \mathcal{D}_{a}, \mathcal{D}_{m} \right] H^{a} +
  2\mathrm{I}_{a} \mathrm{D} V^{a}{}_{m} = 0,
  \label{Eq_GW_Inhomo}
\end{equation}
will be the protagonist of our analysis.

As in the torsionless case, the commutator
$\left[ \mathcal{D}_{a}, \mathcal{D}_{m} \right] H^{a}$
gives rise to some inhomogeneous terms with curvature and torsion,
\begin{align}
  \left[ \mathcal{D}_{a}, \mathcal{D}_{m} \right] H^{a} & =
  \mathrm{I}_{am} \left( R^{a}{}_{b} \wedge H^{b} \right) 
  \nonumber \\ &\quad +
  \mathrm{I}_{a} \left( R^{ab} H_{bm} + R_{mb} H^{ab} \right) 
  \nonumber \\ &\quad -
  \left( \mathrm{D} T_{abm} H^{ab} + T_{abm} \mathcal{D}^{a} H^{b} \right).
  \label{Eq_AntiCommH}
\end{align}
Replacing these inhomogeneous terms and using the generalized Weitzenböck identity of Lemma~\ref{lem:GenWeitzen} (see also Ref.~\cite{Barrientos:2019msu}), $W_{m}$ can be written in terms of the generalized de~Rham--Laplace wave operator (cf.~Definition~\ref{def:GdRL}) as
\begin{align}
  W_{m} & = \blacksquare_{\mathrm{dR}} H_{m} +
  \mathrm{I}_{am} \left( R^{a}{}_{b} \wedge H^{b} \right) 
  \nonumber \\ &\quad -
  \left( \mathrm{D} T_{abm} H^{ab} + T_{abm} \mathcal{D}^{a} H^{b} \right) +
  2\mathrm{I}_{a} \mathrm{D} V^{a}{}_{m}. \label{Eq_Wm}
\end{align}

This implies that the analysis of Ref.~\cite{Barrientos:2019msu} has to be generalized to include the extra inhomogeneous terms in Eq.~(\ref{Eq_Wm}).
A second important observation regarding $W_{m}$ is that its ``trace,'' $\mathrm{I}_{p} W^{p}$, consists only of torsional inhomogeneous terms besides the wave operator acting on $H=H^{a}{}_{a}$, 
\begin{align}
  \mathrm{I}_{p} W^{p} & = -\mathcal{D}_{a} \mathcal{D}^{a} H +
  T^{abc} \left( \mathcal{D}_{c} H_{ab} + 2 T_{bcd} H^{d}{}_{a} \right) 
  \nonumber \\ &\quad +
  2\mathrm{I}_{ab} \mathrm{D} V^{ab}.
  \label{Eq_Trace_W}
\end{align}

In general lines, the metric mode of GWs behaves similarly to the standard torsionless case, albeit with an important and subtle difference.
Let us observe that, in the standard case, besides the Lorenz gauge fixing~(\ref{eq:LGF}),
it is possible to perform (in a vacuum region only) an additional gauge transformation to render $h_{\mu\nu}$ traceless: the \emph{transverse-traceless gauge}. In our case, things are a bit more complicated.
Under an infinitesimal LD generated by $\tilde{\xi}$, the trace of the metric perturbation,
$H = \nwse{H}{a}{a}$, changes as
\begin{equation}
  H \mapsto \tilde{H} = H + 2 \mathring{\mathcal{D}}_{a} \tilde{\xi}^{a}.
\end{equation}
Naively it may seem possible to choose a $\tilde{\xi}$ such that $\tilde{H}=0$, as long as $\tilde{\xi}$ also satisfies $\mathcal{D}^{a}\mathcal{\mathring{D}}_{a}\tilde{\xi}+\mathrm{I}^{a}\mathrm{D\mathring{D}I}_{a}\tilde{\xi}=0$ in order not to spoil the Lorenz condition~(\ref{Eq_Lorenz_Gauge}). 
The problem with such a construction is Eq.~(\ref{Eq_Trace_W}).
A traceless $H^{a}$ would create an unphysical constraint between the perturbations and torsion,
\begin{equation}
  T^{abc} \left(
    \mathcal{D}_{c} H_{ab} + 2 T_{bcd} H^{d}{}_{a}
  \right)
  + 2 \mathrm{I}_{ab} \mathrm{D} V^{ab} = 0,
  \label{Eq_metida_pata}
\end{equation}
and therefore, in general we must have $H \neq 0$.\footnote{In the standard GR torsionless case, the constraint in Eq.~(\ref{Eq_metida_pata}) vanishes identically and it is possible to impose $h=0$ in a vacuum region.
Some further conditions, such as $h_{0\mu}=0$, are only possible on a flat background even in the torsionless case.}
Torsion thus creates an obstruction to the popular transverse-traceless gauge.
However, comparing orders of magnitude in the terms of Eq.~(\ref{Eq_Trace_W}), we conclude that using a wisely chosen LD we may get a trace that is $\epsilon_{H}$ times smaller than the typical magnitude of the metric perturbation.

In the next section, we will use the eikonal limit of GWs to obtain valuable information.
For instance, from simple inspection of Eq.~(\ref{Eq_Wm}) we can see that the dispersion relation for $H_{m}$ will not be modified at leading order by torsion.
This means that $H_{m}$ propagates at the speed of light on null geodesics, as in the standard torsionless case.
Even further, in Eq.~(\ref{Eq_AntiCommH}) the terms
$\mathrm{I}_{am} \left( R^{a}{}_{b} \wedge H^{b} \right)$,
$\mathrm{I}_{a} \left( R^{ab} H_{bm} + R_{mb} H^{ab} \right)$, and
$\mathrm{D} T_{abm} H^{ab}$
are all of order $H/L^{2}$ and irrelevant to the eikonal limit at both leading and subleading order.
However, the terms $T_{abm} \mathcal{D}^{a} H^{b}$ and $2\mathrm{I}_{a} \mathrm{D} V^{a}{}_{m}$ in Eq.~(\ref{Eq_Wm}) modify the propagation of GW polarization and the conservation of the ``number of rays'' at subleading order, generalizing the results of Ref.~\cite{Barrientos:2019msu} for fields obeying the homogeneous equation
$\left( \mathrm{D}^{\ddag} \mathrm{D} + \mathrm{DD}^{\ddag} \right) H_{m}=0$. In the next section we analyze these affirmations in detail.


\subsection{The eikonal limit of gravitational waves}

Let us write the vierbein and torsional perturbations $H^{a}$ and $V^{ab}$ as%
\footnote{The physical perturbations correspond to the real parts of these complex quantities.}
\begin{align}
  H^{a}  & = \euler^{i\theta} \mathcal{H}^{a}, &
  V^{ab} & = \euler^{i\theta} \mathcal{V}^{ab},
\end{align}
where $\theta$ is a rapidly changing (on a characteristic scale $\lambdabar$) real phase, which, for simplicity, we take to be the same for both geometrical modes.
Here, $\mathcal{H}^{a}=\mathcal{H}^{a}{}_{b}e^{b}$ and $\mathcal{V}^{ab}=\mathcal{V}^{ab}{}_{c}e^{c}$ correspond to slowly changing (on a characteristic scale $L$) one-forms with complex-valued components $\mathcal{H}^{a}{}_{b}$ and $\mathcal{V}^{ab}{}_{c}$.

In terms of the characteristic scales $\lambdabar$ and $L$ we
define the eikonal parameter $\epsilon = \lambdabar / L$.
From Eqs.~(\ref{Eq_V=H_sobre_L}) and~(\ref{Eq_H comp_epsilon_comp_1}) one can easily show that it satisfies
\begin{equation}
  V \ll H \ll \epsilon \ll 1.
\end{equation}
In terms of these, the transverse condition~(\ref{Eq_Lorenz_Gauge}) becomes
\begin{align}
  \mathcal{D}_{a} H^{a} - \frac{1}{2} \mathrm{d} H & =
  i \euler^{i\theta} \left(
    k_{a} \mathcal{H}^{a} - \frac{1}{2} k \mathcal{H}
  \right) 
  \nonumber \\ &\quad +
  \euler^{i\theta} \left(
    \mathcal{D}_{a} \mathcal{H}^{a} - \frac{1}{2} \mathrm{d} \mathcal{H}
  \right),
\end{align}
while $W_{m}$ corresponds to
\begin{align}
  W_{m} & = \euler^{i\theta} \bigg\{
    k_{a} k^{a} \mathcal{H}_{m} - 2i \left[
      k^{a} \left(
        \mathcal{D}_{a} \mathcal{H}_{m} +
        \frac{1}{2} T_{abm} \mathcal{H}^{b}
      \right) 
    \right.
   \nonumber \\ &\quad \left.
      -\mathrm{I}_{a} \left( k \wedge \mathcal{V}^{a}{}_{m} \right)
      +\frac{1}{2} \mathcal{H}_{m} \mathcal{D}_{a} k^{a}
    \right] 
  \nonumber \\ &\quad 
    -\mathcal{D}_{a} \mathcal{D}^{a} \mathcal{H}_{m} +
    \left[ \mathcal{D}_{a}, \mathcal{D}_{m} \right]
    \mathcal{H}^{a} + 2 \mathrm{I}_{a} \mathrm{D} \mathcal{V}^{a}{}_{m}
  \bigg\},
  \label{Eq_W_Eikonal}
\end{align}
where the wave one-form $k$ is given by
\begin{equation}
  k = \mathrm{d} \theta = k_{a} e^{a} = k_{\mu} \mathrm{d} x^{\mu}.
\end{equation}
Equation~(\ref{Eq_W_Eikonal}) generalizes Eq.~(61) of Ref.~\cite{Barrientos:2019msu} to take into account the inhomogeneous terms in Eq.~(\ref{Eq_Wm}).

We can expand $\mathcal{H}^{a}$ and $\mathcal{V}^{ab}$ as
\begin{align}
  \mathcal{H}^{a} & =
  \sum_{n=0}^{\infty} \mathcal{H}_{\left( n \right)}^{a}, &
  \mathcal{V}^{ab} & =
  \sum_{n=0}^{\infty} \mathcal{V}_{\left( n \right)}^{ab},
\end{align}
where
$\mathcal{H}_{\left( n \right)}^{a}$ and
$\mathcal{V}_{\left( n \right)}^{ab}$ are of order $\epsilon^{n}$.
The leading orders,
$\mathcal{H}_{\left( 0 \right)}^{a}$ and $\mathcal{V}_{\left( 0 \right)}^{ab}$,
correspond to dominant, $\lambdabar$-independent pieces.
The subleading order, $n=1$, describes the propagation of polarization, while terms with $n \geq 2$ correspond to higher-order deviations from the geometric optics limit.


To leading order, the equation $W_{m}=0$ implies the canonical dispersion relation for the metric mode,
\begin{equation}
  k_{a} k^{a} = k_{\mu}k^{\mu} = 0,
  \label{Eq_Leading_Dispersion}
\end{equation}
and the standard transverse condition,
\begin{equation}
  k_{a} \mathcal{H}_{\left( 0 \right)}^{a} - \frac{1}{2} k \mathcal{H}_{\left( 0 \right)} = 0.
\end{equation}
This means that, to leading order in the dispersion relation, there is no difference with the standard GR torsionless case. However, at subleading order, $W_{m}=0$ gives rise to new interactions with torsion,
\begin{align}
 0 &= k^{a} \left(
    \mathcal{D}_{a} \mathcal{H}_{m}^{\left( 0 \right)} +
    \frac{1}{2} T_{abm} \mathcal{H}_{\left( 0 \right)}^{b}
  \right) \notag \\ &\quad -
  \mathrm{I}_{a} \left(
    k \wedge \mathcal{V}_{\left( 0 \right)}^{a}{}_{m}
  \right) +
  \frac{1}{2} \mathcal{H}_{m}^{\left( 0 \right)} \mathcal{D}_{a} k^{a} ,
  \label{Eq_Subleading_Amplitude_Prop}
\end{align}
and the transverse condition assumes the form
\begin{equation}
  \mathcal{D}_{a} \mathcal{H}_{\left( 0 \right)}^{a} -
  \frac{1}{2} \mathrm{d} \mathcal{H}_{\left( 0 \right)} = 0.
\end{equation}

It may seem strange to have the usual dispersion relation~(\ref{Eq_Leading_Dispersion}) even in this case.
On a geometry with nonvanishing torsion, one may expect GWs to propagate on null auto-parallels.\footnote{It is worth noticing that, when torsion is present, geodesics and auto-parallels do not necessarily coincide: while the former are curves of extremal length with respect to the metric, the latter are curves over which a vector is parallel transported with respect to itself according to the connection (see Ref.~\cite{Hehl76} for a discussion). }
However, this is not the case:  Eq.~(\ref{Eq_Leading_Dispersion}) implies that GWs propagate on null geodesics, regardless of the background torsion.

Differentiating Eq.~(\ref{Eq_Leading_Dispersion}) and using $k_{a}=\mathcal{D}_{a}\theta$, we find
\begin{equation}
  k^{a} \left( \mathcal{D}_{a} k_{b} + T^{c}{}_{ab} k_{c} \right) = 0.
\end{equation}
This result is equivalent to $k^{a} \mathring{\mathcal{D}}_{a} k^{b} = 0$,
and since $\mathring{\mathcal{D}}_{a} = \mathring{\nabla}_{a}$, we get
\begin{equation}
  k^{a} \mathring{\nabla}_{a} k^{b} = 0.
\end{equation}
This means that the metric mode of GWs travels along \emph{null geodesics}, not \emph{null auto-parallels}.


At subleading order, torsion gives rise to an anomalous propagation of polarization.
In order to analyze it, let us parametrize the wave polarization and amplitude as
$\mathcal{H}_{\left( 0 \right)}^{a}$ and $\mathcal{V}_{\left( 0 \right)}^{ab}$ through
\begin{align}
  \mathcal{H}_{\left( 0 \right)}^{a}  & = \mathcal{H} P^{a},
  \label{Eq_H=HP} \\
  \mathcal{V}_{\left( 0 \right)}^{ab} & = \mathcal{V} Q^{ab}.
  \label{Eq_V=VQ}
\end{align}
Here, wave polarization is described by the one-forms
$P^{a} = P^{a}{}_{c} e^{c}$ and
$Q^{ab} = Q^{ab}{}_{c} e^{c}$,
while wave amplitude is described by
$\mathcal{H}$ and $\mathcal{V}$.
The polarization components are complex valued,
$P^{a}{}_{c}, Q^{ab}{}_{c} \in \mathbb{C}$,
while the amplitude scalars are real,
$\mathcal{H}, \mathcal{V} \in \mathbb{R}$.
The polarization forms are normalized as
\begin{align}
  \left( \bar{P}_{a}  \wedge \hodge P^{a}  \right) & = v_{\left( 4 \right)}, \\
  \left( \bar{Q}_{ab} \wedge \hodge Q^{ab} \right) & = v_{\left( 4 \right)},
\end{align}
where $v_{\left( 4 \right)} = \frac{1}{4!} \epsilon_{abcd} e^{a} \wedge e^{b} \wedge e^{c} \wedge e^{d}$ is the volume four-form and a bar above a quantity denotes its complex conjugate.
This implies the following normalization on the wave polarization and amplitude:
\begin{align}
  \hodge \left( \bar{\mathcal{H}}_{a}^{\left( 0 \right)} \wedge
  \hodge \mathcal{H}_{\left( 0 \right)}^{a} \right) & = -\mathcal{H}^{2},
  \label{Eq_hodgeHH=-H2} \\
  \hodge \left( \bar{\mathcal{V}}_{ab}^{\left( 0 \right)} \wedge
  \hodge \mathcal{V}_{\left( 0 \right)}^{ab} \right) & = -\mathcal{V}^{2}.
  \label{Eq_hodgeVV=-V2}
\end{align}


Let $J = \mathcal{H}^{2} k$ be the ``number-of-rays'' current density one-form.
When considering the eikonal limit and standard Riemannian geometry, this form is conserved,
$\mathrm{d}^{\dag} J = 0$.
However, this is no longer true for a geometry with nonvanishing torsion, as was first shown in Ref.~\cite{Barrientos:2019msu} for the homogeneous wave equation case.
The current situation is similar, but new terms have to be added.
Since $\mathrm{d}^{\dag} J = T_{abc} \eta^{ab} J^{c} - \mathcal{D}_{a} J^{a}$, let us start by computing
\begin{equation}
  \mathcal{D}_{a} J^{a} = \mathcal{D}_{a} \left( \mathcal{H}^{2} \right) k^{a} +
  \mathcal{H}^{2} \mathcal{D}_{a} k^{a}.
\end{equation}
Using Eq.~(\ref{Eq_hodgeHH=-H2}), Lemma~\ref{lem:Com_D_Hodge} and Eq.~(\ref{Eq_H=HP}), we have that
\begin{align}
  \mathcal{D}_{a} \left( \mathcal{H}^{2} \right) & =
  -\hodge \left( \mathcal{D}_{a} \bar{\mathcal{H}}_{c} \wedge \hodge \mathcal{H}^{c} +
  \mathcal{D}_{a} \mathcal{H}_{c} \wedge \hodge \bar{\mathcal{H}}^{c} \right) 
  \nonumber \\ &\quad -
  \mathcal{H}^{2} T_{bca} \Pi^{bc},
\end{align}
where
\begin{equation}
  \Pi^{ab} = \eta^{ab} - \frac{1}{2} \left(
    \bar{P}^{ca} P_{c}{}^{b} + P^{ca} \bar{P}_{c}{}^{b}
  \right).
\end{equation}
This allows us to write
\begin{align}
  \mathcal{D}_{a} J^{a} & = -\hodge \left(
    k^{a} \mathcal{D}_{a} \bar{\mathcal{H}}_{c} \wedge \hodge \mathcal{H}^{c} +
    k^{a} \mathcal{D}_{a} \mathcal{H}_{c} \wedge \hodge \bar{\mathcal{H}}^{c}
  \right) 
  \nonumber \\ &\quad -
  T_{abc} \Pi^{ab} J^{c} + \mathcal{H}^{2} \mathcal{D}_{a} k^{a}.
\end{align}
Using Eq.~(\ref{Eq_Subleading_Amplitude_Prop}), we find
\begin{equation}
  \mathcal{D}_{a} J^{a} = \left(
    \frac{\mathcal{V}}{\mathcal{H}} \Theta_{c} - T_{abc} \Pi^{ab}
  \right) J^{c},
  \label{Eq_DaJa}
\end{equation}
with
\begin{equation}
  \Theta_{c} = \left(
    \bar{Q}_{c}{}^{pq} P_{pq} + \bar{Q}^{ba}{}_{a} P_{bc}
  \right) + \left(
    Q_{c}{}^{pq} \bar{P}_{pq} + Q^{ba}{}_{a} \bar{P}_{bc}
  \right).
\end{equation}
From here, we can see that $J$ is no longer conserved in the eikonal limit, that is
\begin{equation}
  \mathrm{d}^{\dag} J = \left[
    T_{abc} \left( \eta^{ab} + \Pi^{ab} \right) -
    \frac{\mathcal{V}}{\mathcal{H}} \Theta_{c}
  \right]
  J^{c}.
  \label{Eq_dJ_no_consv}
\end{equation}
Observe that, even on backgrounds with vanishing torsion, a nonzero torsional perturbation will lead to a breakdown in the number-of-rays conservation, $\mathrm{d}^{\dag} J \neq 0$.

Another consequence of torsion is that the polarization one-form $P^{a}$ is no longer parallel transported along the GW trajectory.
This can be seen by using Eqs.~(\ref{Eq_Subleading_Amplitude_Prop}) and~(\ref{Eq_DaJa}), which allow us to write 
\begin{align}\notag
 0 &=  k^{a} \mathcal{D}_{a} P_{m} + \left(
    T_{cbm} P^{b} - \frac{1}{2} T_{abc} \Pi^{ab} P_{m}
  \right) k^{c} \\ &\quad +
  \frac{\mathcal{V}}{\mathcal{H}} \left[
    \frac{1}{2} \Theta_{c} k^{c} P_{m} -
    \mathrm{I}_{a} \left( k \wedge Q^{a}{}_{m} \right)
  \right],
\end{align}
which in component language reads
\begin{align}
  k^{c} \mathring{\nabla}_{c} P_{mn} & =
  \frac{1}{2} k^{c} \left[
    T_{mpc} P^{p}{}_{n} + T_{npc}P^{p}{}_{m} + T_{abc} \Pi^{ab} P_{mn} 
  \right.
  \nonumber \\ &\quad +
    \frac{\mathcal{V}}{\mathcal{H}} \left(
      Q_{cmn} + Q_{cnm} + \eta_{mc} Q_{na}{}^{a} 
    \right. \nonumber \\ &\quad + \left. \left.
      \eta_{nc} Q_{ma}{}^{a} - \Theta_{c} P_{mn}
    \right)
  \right].
  \label{Eq_Prog_Polarization_Comp}
\end{align}
From Eq.~(\ref{Eq_Prog_Polarization_Comp}) we observe that the polarization components $P_{mn}$ are parallel transported along the trajectory of the GW (i.e., the relation $k^{c}\mathring{\nabla}_{c} P_{mn} = 0$ is fulfilled) only in the Riemannian geometry case, when both background torsion and its perturbations vanish, $T_{abc}=0$, $V^{ab}=0$.
In the general case, the propagation of polarization will be disturbed along the GW trajectory by interactions with background torsion and ``rotons.''

To summarize, torsion does not change the dispersion relation~(\ref{Eq_Leading_Dispersion}), and therefore it changes neither the speed nor the direction of propagation of GWs.
At leading order in the eikonal approximation, its propagation remains the same as in standard GR.
The only difference lies in how torsion changes the propagation of polarization along the GW trajectory [cf.~Eq.~(\ref{Eq_Prog_Polarization_Comp})].


\section{Conclusions and future possibilities}
\label{Sec_TheEnd}

Observational data coming from multimessenger astronomy indicate, to a very high precision, that GWs travel at the speed of light. Due to this, several scalar-tensor theories that predict an anomalous GW speed have been dramatically constrained, with the scalar-GB coupling a particular example of this class. In the present article we showed that dispensing with the torsionless condition---usually \emph{assumed} in gravitational theories---allows the latter case to be reconciled with observations, in stark contrast with the usual torsionless case.


\begin{figure*}
  \centering
  \begin{tikzpicture}[
    x = \textwidth/3.2,
    y = \textwidth/3.2,
    every shadow/.style = {
      shadow xshift = 0pt,
      shadow yshift = 0pt,
      shadow opacity = 20,
    }
  ]
    \foreach \cp / \mx / \my / \thr / \th / \r in {
      13 /  90 / 50 / -20 /  90 / 0.7,
      11 / 100 / 80 /  80 / -80 / 0.4,
      11 /  50 / 30 / 110 / 240 / 0.7,
      11 /  70 / 30 /   0 / 200 / 0.6,
      11 /  30 / 30 /   0 / 180 / 0.2,
      11 /  30 / 30 /   0 / 135 / 0.8,
      15 / 100 / 20 /  10 / 160 / 0.6,
      11 /  80 / 50 / -10 /  15 / 0.4,
      11 /  30 / 30 /   0 /  45 / 0.7,
      11 /  30 / 30 /   0 / -20 / 0.7
    }
    {
      \node [cloud, blur shadow, cloud puffs = \cp,
             minimum width = \mx pt, minimum height = \my pt, rotate = \thr]
        at (\th:\r) {};
    }
    \def\nlisa{24}
    \foreach \k in {1, 2, ..., \nlisa} {
      \pgfmathsetmacro{\th}{\k*360/\nlisa}
      \draw (\th:1) pic [scale = 2/\nlisa, rotate = 120*rnd] {lisa};
    }
    \draw [<-] (10*360/\nlisa:1.12) -- +(135:0.1)
      node [anchor = south east] {GW detector};
    \draw [densely dashed, ->] (-1, -1) .. controls (135:0.2) .. (1, 1)
      node [pos = 0.05, anchor = south east] {Incoming GW, $P_{mn}^{\text{(in)}}$}
      node [pos = 0.95, anchor = north west] {Outgoing GW, $P_{mn}^{\text{(out)}}$}
      node [anchor = south west] {GW};
    \draw [densely dashed, ->] (-1, 1) .. controls (45:0.2) .. (1, -1)
      node [anchor = north west] {GW};
  \end{tikzpicture}
  \caption{Conceptual diagram of a possible way to map torsion ``clouds'' inside a region of space by measuring changes in the polarization of incoming/outgoing GWs. Each small triangle represents a GW detector.}
  \label{Fig_GW}
\end{figure*}
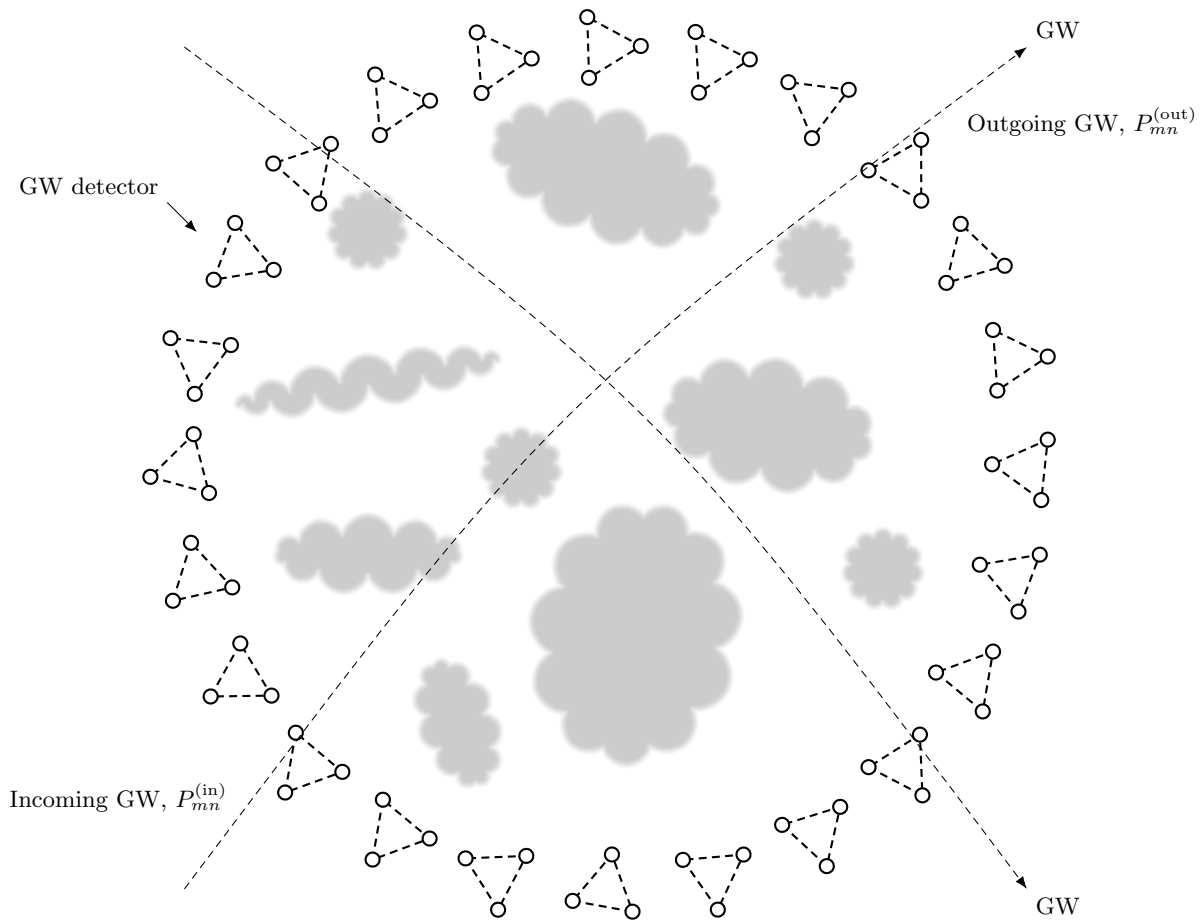


Our starting point is a fairly standard scalar-tensor gravity theory, defined by the Lagrangian~(\ref{Eq_Lagrangian-0}).
Crucially, removing the torsionless condition makes the vierbein and the spin connection independent degrees of freedom, and the same Lagrangian can give rise to two radically different dynamical theories, depending on whether or not this hypothesis is assumed \emph{a priori}.
One common misconception is to consider the torsional case as an exotic and small departure from the torsionless case.
That may be true for some of the ECSK phenomenology, but it is definitely not the case for large sectors of the Horndeski Lagrangian~\cite{Valdivia:2017sat}.
Another misconception is the belief that in order to recover the standard Riemannian geometry, it suffices to impose $T^{a}=0$ in all equations.
Both misconceptions arise from the failure to recognize that the torsionless condition is a strong constraint on the geometry, in the sense that we must add a Lagrange multiplier to impose it, as shown in Eqs.~(\ref{Eq_L-LM})--(\ref{Eq_L_Multiplier}) and Ref.~\cite{Valdivia:2017sat}. In this sense, the torsionless condition amounts to an unproven hypothesis.
Imposing it means \textit{adding} a hypothesis to the theory and a constraint to the geometry.

The dynamics derived from the field equations makes it plausible to expect that GWs propagate at the speed of light.\footnote{We notice that a similar behavior occurs in the first-order formulation of Chern--Simons modified gravity~\cite{Alexander:2008wi}, even though its metric formulation is compatible with the luminal propagation of GWs, as shown in Ref.~\cite{Nishizawa:2018srh}.}
To prove it, we have developed some new mathematical tools to study the wave operator on an RC geometry and a new approach to study perturbations on a background with dynamical torsion. Besides the standard metric mode, there is an additional one associated to the torsional degrees of freedom. We showed how to use the generalized Lie derivative to find a ``generalized Lorenz gauge fixing'' for the case of nonvanishing torsion.
These provide the necessary mathematical tools to tackle the problem of GWs on a dynamical torsion background, which allowed us to determine their speed in the current theory.

After a general analysis of sizes and frequencies, we found the inhomogeneous gravitational wave equation for this theory [cf.~Eq.~(\ref{Eq_GW_Inhomo})], and we discovered that torsion obstructs the popular transverse-traceless gauge. Then, we proceeded with the geometric optics (eikonal) approximation. To leading order, we recovered the canonical dispersion relation $k_{\mu} k^{\mu} = 0$, implying that GWs travel on null geodesics. The point is subtle but essential. EMWs always travel on null geodesics, regardless of any background torsion~\cite{Barrientos:2019msu}.
If GWs would have traveled on null auto-parallels, this might lead to an unobserved delay between GW/EMW, even if both of them were traveling at the same speed. Our results show that this is not the case: EMWs and GWs travel at the same speed and on the same kind of trajectory.

Torsion does affect GWs at subleading order, though.
In the torsionless case, both EMWs and GWs satisfy two eikonal limit conditions:
(i)~the number-of-rays current density is conserved, and
(ii)~polarization is parallel transported along the null geodesic trajectory.
In the current torsional context, these conditions are satisfied by EMWs but they are no longer valid for GWs: torsion breaks down the conservation of the number of rays and GW polarization is no longer parallel transported along the null geodesic. As discussed in Ref.~\cite{Barrientos:2019msu}, this behavior is generic of waves propagating on a torsional background, and not just a feature of the GB coupling.

This fact opens up many questions regarding how torsion modifies the propagation phenomenology of metric and torsional polarization, which lie beyond the scope of this paper.
The fact that GW polarization is affected by torsion means, however, that one may envision ways to use it to study the distribution of torsion.

Let us consider space-based GW detectors distributed on the surface of a vast spherical region, as suggested in Fig.~\ref{Fig_GW} (e.g., GW detectors orbiting the Sun at a distance of several AU at different angles with the ecliptic plane).
If each detector is capable of measuring GW polarization, the whole set of detectors could make a ``torsion tomography'' of sorts of the enclosed region, comparing the polarization components $P_{mn}$ of incoming and outgoing GWs.
For instance, the degree of rotation of GW polarization would encode information about torsion within the sphere. This technique may seem (and probably is) far fetched, given current technological possibilities.
However, from a certain point of view, it is also a very conservative idea.
One may regard it as the scaled-up gravitational version of Faraday's 1845 setup for measuring what is now known as Faraday rotation (see Fig.~\ref{Fig_Faraday}).
In this effect, certain transparent dielectric materials present circular birefringence when a magnetic field goes through them.
The different speeds of the two circular polarization modes have the effect of rotating the plane of polarization of light going through the dielectric parallel to the magnetic field, providing information on the properties of the material.
In the gravitational version, the change in the propagation of polarization is caused not by birefringence, since all GW polarization modes travel on null geodesics, but by the interaction with torsion. There exists, of course, the possibility of such an experiment indicating that torsion vanishes.
In such a case, the torsionless condition would become a valuable clue from nature on the kind of solutions of our theories that match reality rather than an untested hypothesis on their structure.


\begin{figure}
  \centering
  \includegraphics[width=.8\columnwidth]{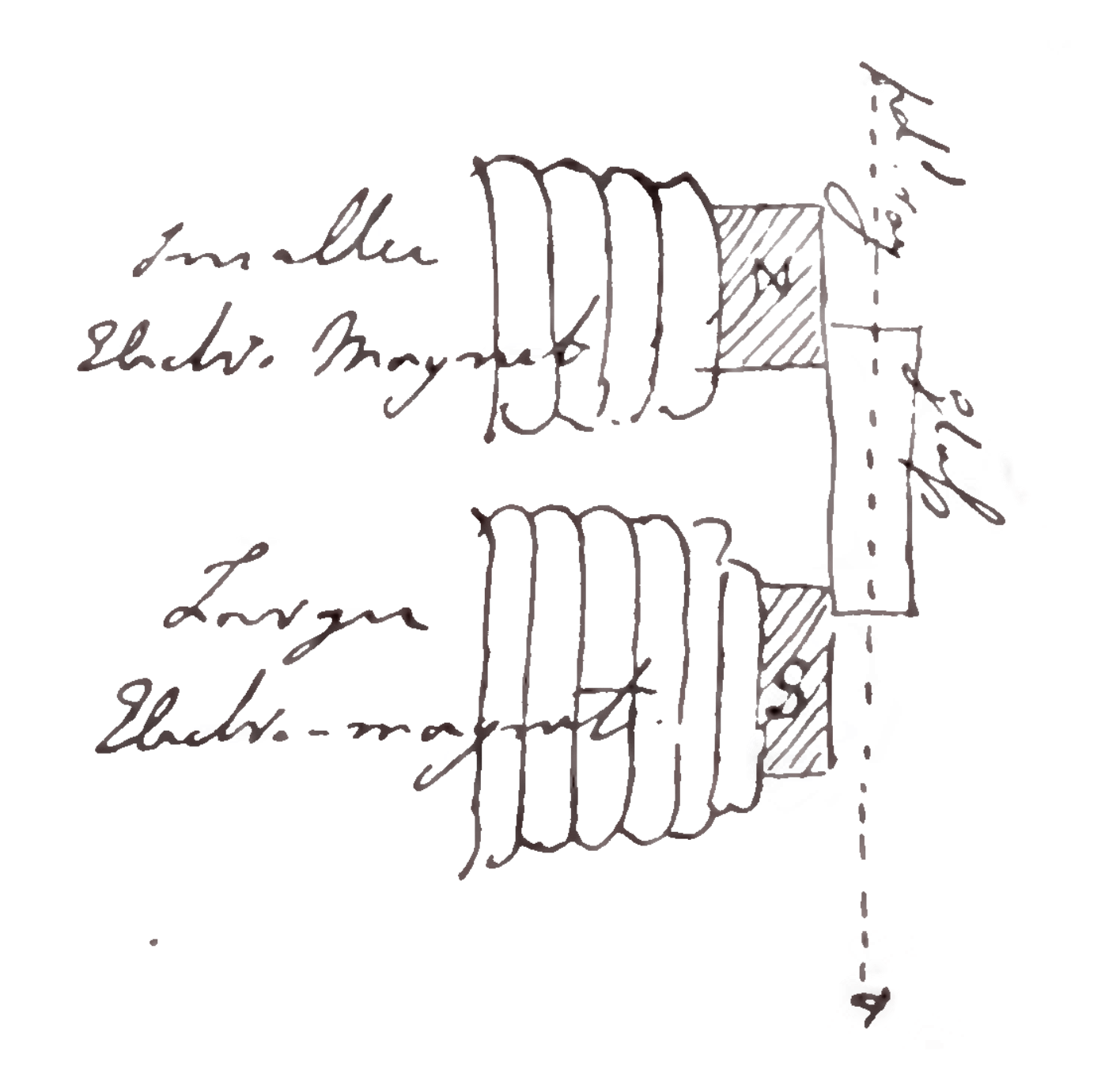}
  \caption{Figure from Michael Faraday's diary~\cite{FaradaysDiary4}, September 13, 1845. A piece of transparent glass (``silico borate of lead'') rotated the plane of polarization of the light going through the glass in the same direction as the applied magnetic field. Faraday did not understand at the time the mechanism behind this phenomenon, but he knew that the change in polarization was a probe codifying important information on the ray of light, the nature of the magnetic field, and the material itself. After many other tests, he famously wrote in entry 7718, September 30, 1845: ``Still, I have at last succeeded in illuminating a magnetic curve or line of force and in magnetizing a ray of light.'' This experiment was the first indication of a deeper relationship between optics and electromagnetism, two decades before Maxwell's prediction~\cite{Maxwell1865} of light being an EMW in 1865.}
  \label{Fig_Faraday}
\end{figure}


\begin{acknowledgments}
The authors wish to thank Yuri Bonder, Jens Boos, Fabrizio Canfora, Oscar Castillo-Felisola, Norman Cruz, Nicolás González, Julio Oliva, Miguel Pino, Francisca Ramírez, Patricio Salgado, Sebastián Salgado, Jorge Zanelli, and Alfonso Zerwekh for enlightening discussions and comments.
J.~B. acknowledges financial support from Comisión Nacional de Investigación Científica y Tecnológica (CONICYT) grant 21160784 and also thanks the Institute of Mathematics of the Czech Academy of Sciences, where part of this work was carried out.
F.~C.-T. was supported by Comisión Nacional de Investigación Científica y Tecnológica (CONICYT) grant 72160340. He wishes to thank Dieter Lüst for his kind hospitality at the Arnold Sommerfeld Center for Theoretical Physics in Munich.
The work of C.~C. is supported by Proyecto POSTDOC\_DICYT, Código 041931CM\_POSTDOC, Universidad de Santiago de Chile.
F.~I. is funded by Fondo Nacional de Desarrollo Científico y Tecnológico (FONDECYT) grant 1180681.
P.~M. is supported by Comisión Nacional de Investigación Científica y Tecnológica (CONICYT) grant 21161574.
O.~V. is supported by Project VRIIP0062-19, Universidad Arturo Prat, Chile.
\end{acknowledgments}


\appendix

\section{Detailed derivation of the wave equation}
\label{Apendice}

Replacing Eqs.~(\ref{Eq_Perturb_R1}), (\ref{Eq_Perturb_U1}) and~(\ref{Eq_D-Da}) in Eq.~(\ref{Eq_High_Frequency}), it is straightforward to show that
\begin{equation}
  \frac{1}{2} \epsilon_{abcn} R_{\left( 1 \right)}^{ab} \wedge e^{c} =
  \left( W_{mn} - \frac{1}{2} \eta_{mn} W^{p}{}_{p} \right) \hodge e^{m},
\end{equation}
where
\begin{equation}
  W^{m}{}_{n} = \left(
    \mathrm{I}_{n} \mathcal{D}_{a} -
    \mathrm{I}_{a} \mathcal{D}_{n} +
    T^{p}{}_{na} \mathrm{I}_{p}
  \right)
  \left[ U_{\left( 1 \right)}^{am} + V^{am} \right].
\end{equation}
Using the commutator from Eq.~(\ref{Eq_Super[Ia,Db]}),
\begin{equation}
  T^{p}{}_{na} \wedge \mathrm{I}_{p} =
  \mathrm{I}_{a} \mathcal{D}_{n} - \mathcal{D}_{n} \mathrm{I}_{a},
\end{equation}
$W^{m}{}_{n}$ is given by
\begin{equation}
  W^{m}{}_{n} = \left(
    \mathrm{I}_{n} \mathcal{D}_{a} -
    \mathcal{D}_{n} \mathrm{I}_{a}
  \right)
  \left[ U_{\left( 1 \right)}^{am} + V^{am} \right].
\end{equation}

It is possible to prove that the $U_{ab}^{\left( 1 \right)}$ term in Eq.~(\ref{Eq_Perturb_U1}) can be rewritten as
\begin{equation}
  U_{ab}^{\left( 1 \right)} =
  -\frac{1}{2} \left( \mathcal{D}_{a} H_{b} - \mathcal{D}_{b} H_{a} \right),
\end{equation}
and therefore
\begin{align}
  W_{mn} & = \frac{1}{2} \Big[
    -\mathrm{I}_{n} \mathcal{D}_{a} \mathcal{D}^{a} H_{m} +
    \mathrm{I}_{n} \mathcal{D}_{a} \mathcal{D}_{m} H^{a} 
   \nonumber \\ &\quad  +
    \mathcal{D}_{n} \mathrm{I}_{a} \left(
      \mathcal{D}^{a} H_{m} - \mathcal{D}_{m} H^{a}
    \right)
  \Big]
  \nonumber \\ &\quad +
  \left(
    \mathrm{I}_{n} \mathcal{D}_{a} - \mathcal{D}_{n} \mathrm{I}_{a}
  \right) V^{a}{}_{m}.
\end{align}

The ``Lorenz gauge fixing'' in Eq.~(\ref{Eq_Lorenz_Gauge}) can be rewritten as
\begin{equation}
  \mathrm{I}_{m} \mathcal{D}_{a} H^{a} +
  \mathrm{I}_{a} \left( \mathcal{D}^{a} H_{m} - \mathcal{D}_{m} H^{a} \right) = 0,
\end{equation}
and therefore
\begin{align}
  W_{mn} & = \frac{1}{2} \Big[
    -\mathrm{I}_{n} \mathcal{D}_{a} \mathcal{D}^{a} H_{m} +
    \mathrm{I}_{n} \left[
      \mathcal{D}_{a}, \mathcal{D}_{m}
    \right] H^{a} 
  \nonumber \\ \label{eqa8} &\quad  +
    \left(
      \mathrm{I}_{n} \mathcal{D}_{m} -
      \mathcal{D}_{n} \mathrm{I}_{m}
    \right)
    \mathcal{D}_{a} H^{a}
  \Big] + \left(
    \mathrm{I}_{n} \mathcal{D}_{a} -
    \mathcal{D}_{n} \mathrm{I}_{a}
  \right) V^{a}{}_{m}.
\end{align}

However, using Eq.~(\ref{Eq_Def_Da}) it is straightforward to show that
\begin{equation}
  \mathrm{I}_{n} \mathcal{D}_{m} -
  \mathcal{D}_{n} \mathrm{I}_{m} =
  \mathrm{I}_{mn} \mathrm{D} - \mathrm{DI}_{mn},
\end{equation}
and since from the ``Lorenz gauge fixing''
$\mathcal{D}_{a} H^{a} = \frac{1}{2} \mathrm{d} H$,
in this gauge we have that
\begin{equation}
\left(  \mathrm{I}_{n}\mathcal{D}_{m}-\mathcal{D}_{n}\mathrm{I}_{m}\right)\mathcal{D}_{a}H^{a}  =\frac{1}{2}\left(  \mathrm{I}_{mn}\mathrm{D}-\mathrm{DI}_{mn}\right)  \mathrm{d}H=0.
\end{equation}
Inserting this result into Eq.~\eqref{eqa8}, we get the inhomogeneous wave equation
\begin{align}
  W_{mn} & = \frac{1}{2} \left(
    -\mathrm{I}_{n} \mathcal{D}_{a} \mathcal{D}^{a} H_{m} +
    \mathrm{I}_{n} \left[
      \mathcal{D}_{a}, \mathcal{D}_{m}
    \right] H^{a}
  \right) 
  \nonumber \\ &\quad + \left(
    \mathrm{I}_{an} \mathrm{D} - \mathrm{DI}_{an}
  \right) V^{a}{}_{m}
  \nonumber \\ & =
  \frac{1}{2} \mathrm{I}_{n} \left(
    -\mathcal{D}_{a} \mathcal{D}^{a} H_{m} + \left[
      \mathcal{D}_{a}, \mathcal{D}_{m}
    \right] H^{a} +
    2\mathrm{I}_{a} \mathrm{D} V^{a}{}_{m}
  \right).
\end{align}


\bibliography{GB2019_arXiv_v3}


\end{document}